\newcommand{\nonl}{\renewcommand{\nl}{\let\nl\oldnl}}
\renewcommand\qedsymbol{$\blacksquare$}
\newcolumntype{P}[1]{>{\centering\arraybackslash}p{#1}}
\newcolumntype{M}[1]{>{\centering\arraybackslash}m{#1}}
\newtheorem{conjecture*}{Conjecture}
\newtheorem{lemma}{Lemma}
\newtheorem{observation}{Observation}
\newtheorem{remark}{Remark}
\newtheorem{definition}{Definition}
\newtheorem{theorem}{Theorem}
\newtheorem{example}{Example}
\newcommand{\pkarxiv}[1]{}
\newcommand{\unif}{{\mathcal U}}
\newcommand{\subfalph}{{\mathcal A}}
\newcommand{\wsj}{{\mathcal P}_{\mathsf{WSJ}}}
\newcommand{\wbu}{{\mathcal P}_{\mathsf{WBU}}}
\newcommand{\wcsj}{{\mathcal P}_{T-\mathsf{WSJ}}}
\newcommand{\expec}{{\mathbb E}}
\newcommand{\mileak}{\rho^{\mathsf{MIL}}}
\newcommand{\maxleak}{\rho^{\mathsf{MaxL}}}
\newcommand{\Prob}{\mathsf{P}}
\newcommand{\milq}{q_0}
\newcommand{\milg}{g_{\mathsf{MIL}}}
\newcommand{\milgprime}{\tilde{g}_{\mathsf{MIL}}}
\newcommand{\milh}{h_{\mathsf{MIL}}}
\newcommand{\milf}{f_{\mathsf{MIL}}}
\newcommand{\milC}{C_{\mathsf{MIL}}}
\newcommand{\milk}{k_{\mathsf{MIL}}}
\newcommand{\maxlq}{q_0}
\newcommand{\maxlg}{g_{\mathsf{MaxL}}}
\newcommand{\maxlgprime}{\tilde{g}_{\mathsf{MaxL}}}
\newcommand{\maxlh}{h_{\mathsf{MaxL}}}
\newcommand{\maxlf}{f_{\mathsf{MaxL}}}
\newcommand{\maxlC}{C_{\mathsf{MaxL}}}
\newcommand{\maxlk}{k_{\mathsf{MaxL}}}
\newcommand{\comment}[1]{}
\title{Converse Bounds for Sun-Jafar-type\\ Weak Private Information Retrieval} 
\begin{document}


\author{Chandan Anand, Jayesh Seshadri, Prasad Krishnan, Gowtham R. Kurri 

\thanks{Chandan, Jayesh, Dr. Krishnan and Dr. Kurri are with the Signal Processing and Communications Research Center, International Institute of Information Technology, Hyderabad, 500032, India (email: $\{$chandan.anand@research., jayesh.seshadri@research., prasad.krishnan@,gowtham.kurri@$\}$iiit.ac.in). 
}
}

\maketitle

\allowdisplaybreaks 

\begin{abstract}
 

Building on the well-established capacity-achieving schemes of Sun-Jafar (for replicated storage) and the closely related scheme of Banawan-Ulukus (for MDS-coded setting), a recent work by Anand et al. proposed new classes of weak private information retrieval (WPIR) schemes for the collusion-free (replication and MDS-coded) setting, as well as for the $T$-colluding scenario. In their work, Anand et al. characterized the expressions for the rate-privacy trade-offs for these classes of WPIR schemes, under the mutual information leakage and maximal leakage metrics. Explicit achievable trade-offs for the same were also presented, which were shown to be competitive or better than prior WPIR schemes. However, the class-wise optimality of the reported trade-offs was unknown. In this work, we show that the explicit rate-privacy trade-offs reported for the Sun-Jafar-type schemes by Anand et al. are class-wise optimal for the non-colluding and replicated setting. Furthermore, we prove the class-wise optimality for Banawan-Ulukus-type MDS-WPIR and Sun-Jafar-type $T$-colluding WPIR schemes, under threshold-constraints on the system parameters. When these threshold-constraints do not hold, we present counter-examples which show that even higher rates than those reported before can be achieved. 

\end{abstract}


\section{Introduction}
\label{sec:intro}
In \textit{Private Information Retrieval} (PIR), a client intends to retrieve a file out of $M$ files stored in $N$ servers (or databases) that store a collection of files, while ensuring that no individual server (or subsets of servers) can determine the index of the file being requested. A query-response protocol issued by the client to the servers enables such private retrieval. This is called a \textit{PIR protocol} (or scheme). In these protocols, a client sends a set of queries to the servers in a manner that masks the identity of the requested file. The server returns the corresponding responses, from which the client reconstructs the desired information.

The performance of a PIR scheme is measured using the retrieval rate, defined as the ratio of the size of the desired file to the total number of bits downloaded during the execution of the scheme. The maximum rate possible across all PIR schemes is known as the capacity of PIR. The seminal work of Sun and Jafar \cite{sun2017capacity} establishes the capacity of PIR schemes for systems with replicated storage, where each server stores all files, and with non-colluding servers, where no two servers are allowed to share information. This work was extended to the $T$-collusion PIR (or simply, $T$-PIR) scenario \cite{sun2017_T_capacity}, in which upto $T\leq N-1$ servers can collude. Banawan and Ulukus \cite{banawan2018capacity} presented the capacity-achieving scheme for collusion-free PIR for the $(K, N)$-MDS storage scenario, in which the files are placed across the $N$ servers upon encoding using a MDS code of dimension $K$ and length $N$. Tian, Sun and Chen \cite{tian2019capacity} presented a capacity-achieving scheme (which we call the \textit{TSC scheme}) for the collusion-free scenario that simultaneously also achieved the minimum message size. 

In a PIR scheme, when the privacy constraint is relaxed in a controlled manner so that no individual server can infer the client’s requested file index beyond a specified threshold, the resulting scheme is defined as a Weak PIR (WPIR) scheme. WPIR schemes have been studied for the collusion-free replicated servers in \cite{lin2021multi,IWPIR2022,WSJPIR2024,CIPM2024} using various privacy metrics. These metrics include but are not limited to \textit{mutual information leakage} ($\mathsf{MIL}$) \cite{lin2021multi}, \textit{maximal leakage} ($\mathsf{MaxL}$) \cite{issa2019operational}, \textit{worst-case information leakage} \cite{kopf2007information}, \textit{$\epsilon$-privacy} \cite{samy2021asymmetric}, and the \textit{converse-induced privacy metric} \cite{jia2019capacity,CIPM2024}. These schemes in these works can be viewed as weak versions of the TSC scheme \cite{tian2019capacity}.  Among these, the work \cite{IWPIR2022} established the best known rate-privacy trade-offs under mutual information and maximal leakage metrics. The work\cite{orvedal2024weakly} extended the TSC-type weak-PIR schemes to MDS-coded servers. A recent work \cite{zhao2025optimizingDP} optimized another version of Weak-PIR, called Leaky-PIR (L-PIR) (studied previously in \cite{samy2021asymmetric,lin2021multi}), under the $\epsilon$-differential privacy metric.   Information-theoretic converses for WPIR under $\mathsf{MIL}$ and $\mathsf{MaxL}$ constraints were presented in \cite{lin2021multi}. These converses apply to a wide-variety of PIR schemes. However, the tightness of these bounds holds only for a very limited choice of system parameters. 

A recent work \cite{WSJPIR2024} by the present authors showed a natural technique by which any capacity-achieving \textit{Sun-Jafar-type} scheme can be extended to an appropriate WPIR scheme under the $\mathsf{MIL}$ and $\mathsf{MaxL}$ privacy metrics. Using this technique, WPIR versions of the original scheme for collusion-free setup \cite{sun2017capacity}, the $T$-collusion scheme \cite{sun2017_T_capacity}, and the MDS-PIR scheme \cite{banawan2018capacity}, were presented \cite{WSJPIR2024}, and rate-privacy trade-offs for these were characterized. The $T$-collusion WPIR scheme in \cite{WSJPIR2024} was the first $T$-collusion WPIR scheme, and remains the only such scheme till the time of this submission, to the best of our knowledge. 
For the collusion-free scenario, under the $\mathsf{MIL}$ metric, the scheme from \cite{WSJPIR2024} achieves a suboptimal trade-off compared to \cite{IWPIR2022}. However, under the $\mathsf{MaxL}$ metric, the rate-privacy trade-offs are identical to \cite{IWPIR2022} $\mathsf{MaxL}$ result. For the colusion-free and the MDS-coded scenarios, the Sun-Jafar-type schemes \cite{WSJPIR2024} have competitive or better performance than prior work.

The core idea underlying \cite{WSJPIR2024} is a time-sharing strategy over the number of undesired files involved in the protocol, denoted by $M'$, where $M'$ is a random variable that takes values in set $\{1, 2, \cdots, M-1\}$, according to some distribution $\Prob_{M'}$ chosen by the client. For each PIR scenario, the rate-privacy trade-offs reported in \cite{WSJPIR2024} were achieved by carefully tuning  $\Prob_{M'}$ based on the privacy constraint and the system parameters. However, a complete characterization of the \textit{class-wise optimal} rate-privacy trade-offs that can be achieved by the Sun-Jafar-type WPIR class of schemes given in \cite{WSJPIR2024} is so far unknown. 

In this work, we address this open problem partially. 
The subsequent organization of this work, along with its contributions are as follows.
\begin{itemize}[leftmargin=*]
    \item In Section~\ref{sec:system_model_and_WSJ}, we provide a brief overview of the WPIR system model and recall the construction of the Sun-Jafar-type WPIR schemes \cite{WSJPIR2024}. We then formalize the problem of characterizing the maximum rates of these schemes under given privacy constraints. 
    
    \item In Section~\ref{sec:problem_formulation_main_results}, we formulate the problem of class-wise rate maximization for WPIR protocols over a distributed $(K, N)$-MDS coded storage system within the Banawan-Ulukus framework. This section presents the optimization setup, states the main theorems to be established, and illustrates the results through representative examples. 
    
    \item Theorem~\ref{thm:new_WBU_MIL_thm} characterizes the class-wise optimal rate-privacy trade-off achievable by Banawan-Ulukus-type WPIR schemes under the $\mathsf{MIL}$ constraint. We establish that the trade-offs derived in \cite{WSJPIR2024} are optimal within this class whenever the storage rate satisfies $K/N \le 0.79587$. An analogous result is obtained under the $\mathsf{MaxL}$ constraint through Theorem~\ref{thm:new_WBU_MaxL_thm}, where the corresponding threshold on the storage rate is $K/N \le 0.60199$. Collectively, these results demonstrate that, within the distributed storage setting considered here, the explicit trade-offs in \cite{WSJPIR2024} are class-wise optimal only up to the respective leakage-dependent thresholds.

    \item Through a numerical counter-example, we show that the $(K, N)$-MDS WPIR trade-off reported in \cite{WSJPIR2024} may not be globally optimal across all storage rates under either leakage metric. Specifically, in Example~\ref{example:counterex_higherratebeyondthreshold_MIL}, we consider the $\mathsf{MIL}$ constraint with a storage rate of $K/N=0.9$, where $(N, K)=(10,9)$ and $M=4$ files. 
    We demonstrate that there exists a choice for the distribution ${M'}$ which achieves a strictly higher rate of $0.440$ compared to the rate $0.418$ achieved by the scheme in \cite{WSJPIR2024} at the same leakage level. A similar phenomenon is observed under the $\mathsf{MaxL}$ constraint beyond the threshold $K/N=0.60199$, thereby confirming that the previously known trade-offs cease to be class-wise optimal once the storage rate exceeds the corresponding critical value.

    \item We show that the results obtained for the $(K,N)$-MDS setup extend directly to the $T$-collusion scenario as well. 


    \item The proof of Theorem~\ref{thm:new_WBU_MIL_thm} is presented in Section~\ref{sec:proofofThmWBUMIL}, while the proof of Theorem~\ref{thm:new_WBU_MaxL_thm} is provided in Section~\ref{sec:proofofThmWBUMaxL}.
\end{itemize}

We believe that, while this work does not deal with the information-theoretic converses for WPIR under privacy constraints, the results obtained are important, as the trade-offs achieved by the Sun-Jafar-type WPIR schemes in \cite{WSJPIR2024} are the best known (or the only known), especially for the MDS-PIR and the $T$-PIR scenarios. We conclude our work in Section~\ref{sec:conclusion} with a short discussion.



\textit{Notations}: For $a, b\in\mathbb{Z}$ such that $a\le b$ we denote the set of integers $\{a, a + 1, \cdots, b\}$ as $[a : b]$. 
We also denote the set $\{1,2\cdots,b\}$ as $[b]$. 
For a set $A$, the set of all $b$-sized subsets of $A$ is denoted by $\binom{A}{b}$. Let $\mathcal{A}\subset[N]$, where $N$ is positive integer, and $a_1,a_2\cdots,a_N$ are reals, then $a_{\mathcal{A}}\triangleq\{a_i:i\in\mathcal{A}\}$. The notation $\log$ is taken base $2$, while $\ln$ is the natural logarithm. The notation $\exp(a)$ denotes $e^a$.

\section{System model and Review of Weak Sun-Jafar-type Schemes~\cite{WSJPIR2024}}\label{sec:system_model_and_WSJ}

An information retrieval system consists of $N$ servers (or databases), indexed by $DB_0$, $DB_1$,$\cdots$, $DB_{N-1}$. In the replicated-server setup, each server stores a complete copy of the database consisting of $M$ files denoted by $W_1$, $W_2$, $\cdots$, $W_M$, which are independent and identically distributed ($\mathsf{i.i.d}$). Each file has length $L$ and is represented as $W_i=(W_{i, 1}, \cdots, W_{i, L})$, $i \in [1:M]$. We assume that all sub-files $W_{i,j}$, $\forall i,j$, are $\mathsf{i.i.d}$ according to $\unif(\subfalph)$, where $\subfalph$ is an abelian group. Therefore, the joint entropy of the files is given by $H(W_1,\cdots, W_M) = ML \log{ |\mathcal{A}| }$.

A client wishes to retrieve one of the $M$ stored files, denoted by $W_\theta$, where $\theta$ is chosen uniformly at random from $[1:M]$ (we assume $M\geq 2$ throughout). To retrieve the desired file, the client executes a query-response protocol, also known as an information retrieval (IR) scheme. In this protocol, the client generates queries $Q_l^{\theta} : l \in [0:N-1]$, where each query $Q_l^{\theta}$ takes values in the set $\mathcal{Q}_l$. The query $Q_l^{\theta}$ is then sent to server $l$. Upon receiving the query, the server responds with an answer string $A_l^{\theta}$, which is a deterministic function of the query $Q_l^{\theta}$ and the data stored on that server. We denote the length of $A_l^{\theta}$ as $\ell(A_l^{\theta})$. The query-response protocol must satisfy a \textit{correctness} criteria, i.e, enable the client to decode the desired file from the queries and answers. Thus, $H(W_{\theta} | A^{\theta}_{[0:N-1]}, Q^{\theta}_{[0:N-1]}) = 0.$
The rate of an IR protocol is defined as 
 \begin{align}
 \label{eqn:rate}
     R  \triangleq \frac{L\log|\subfalph|}{\expec[D]}, 
 \end{align}
where $D=\sum_{l=0}^{N-1} \ell(A_l^{\theta})$
is the total downloaded bits, and the expectation is over the distribution of the queries $Q_{[0:N-1]}^\theta$ and $\theta$.

An IR protocol is said to be $T$-collusion private (or a $T$-PIR scheme), if its queries are designed such that they do not reveal the desired index $\theta$ to any subset of up to $T$ colluding servers. We require the following \textit{privacy constraint} to be satisfied for every ${\cal T}\subseteq [0:N-1]$ such that $|{\cal T}|=T$:
\begin{align}
\label{eqn:Tcollusionprivacy}
    I(Q^{\theta}_{\mathcal{T}}, A^{\theta}_{\mathcal{T}}, W_{[1:M]}; \theta) = 0.
\end{align}
When $T=1$, the PIR setup is called a \textit{collusion-free} setup.

An information retrieval scheme that is correct but $I(Q^{\theta}_{\mathcal{T}}, A^{\theta}_{\mathcal{T}}, W_{[1:M]}; \theta)>0$ for some $\cal T$, is called a Weak-PIR (WPIR) scheme. We consider the privacy metrics for WPIR based on mutual information leakage ($\mathsf{MIL}$) and maximal leakage ($\mathsf{MaxL}$) measures similar to \cite{zhouetal_2020_WPIR_MaxL,lin2021multi,IWPIR2022}. The MI leakage at server $l$ is $I(Q_l^{\theta};\theta)$, whereas the $\mathsf{MaxL}$ leakage at server $l$ is $$\mathsf{MaxL}(\theta, Q_l^{\theta})\triangleq \log\left(~\sum_{\pmb{q}\in{\mathcal Q}_l}\max_{m\in[1:M]}\Prob_{Q_l^{\theta}|\theta}(\pmb{q}|m)\right).$$ Building upon these, we define the corresponding privacy metrics in WPIR for the collusion-free case as follows.

\begin{definition}\cite{zhouetal_2020_WPIR_MaxL,lin2021multi}
\label{defn:miprivacy}
The MI based privacy metric for a given IR scheme $\cal C$  with queries $Q_l:l\in[0:N-1]$ is defined as 
   $$\mileak({\cal C}) \triangleq \frac{1}{N}\sum_{l=0}^{N-1} I(\theta;Q_l^{\theta}).$$
The $\mathsf{MaxL}$ based privacy metric of IR scheme $\cal C$ is defined as 
$$\maxleak({\cal C})\triangleq\max_{l\in[0:N-1]}\log\left(~\sum_{\pmb{q}\in{\mathcal Q}_l}\max_{m\in[1:M]}\Prob_{Q_l^{\theta}|\theta}(\pmb{q}|m)\right).$$
\end{definition}
Natural extensions of these to the $T$-collusion setup were defined in \cite{WSJPIR2024} as follows. 

\begin{align*}
     \mileak({\cal C}) \triangleq \frac{1}{\binom{N}
     {T}}\sum_{{\cal T}\in\binom{[0:N-1]}{T}} I(\theta;Q_{\cal T}^{\theta}),
 \end{align*}

\begin{align*}
\small
\maxleak({\cal C})&\triangleq\max_{{\cal T}\in\binom{[0:N-1]}{T}}\log\left(\sum_{\pmb{q}_{\cal T}\in{\mathcal Q}_{\cal T}}\max_{m\in[1:M]}\Prob_{Q_{\cal T}^{\theta}|\theta}(\pmb{q}_{\cal T}|m)\right).
\end{align*}

In the following subsection, we provide a brief description of \cite{WSJPIR2024} for the Weak Sun-Jafar (denoted by $\wsj$) scheme, Weak Banawan-Ulukus ($\wbu$) scheme, and $T$-collusion Weak Sun-Jafar ($\wcsj$) scheme, along with their achieved rate-privacy trade-offs under the mutual information leakage ($\mathsf{MIL}$) metric, and maximal leakage ($\mathsf{MaxL}$) metric. 
\subsection{Review of Rate-privacy trade-offs from \cite{WSJPIR2024}}
The work \cite{WSJPIR2024} presents three WPIR protocols: (a)  Weak Sun-Jafar (denoted by $\wsj$) scheme collusion-free, replicated storage, (b)  Weak Banawan-Ulukus ($\wbu$) scheme for the $(K,N)$-MDS-coded setup, and (c)  Weak Sun-Jafar ($\wcsj$) scheme for the $T$-collusion scenario. These are obtained by modifying the correspoding prior schemes \cite{sun2017capacity,banawan2018capacity,sun2017_T_capacity} under perfect privacy. Rate-privacy trade-offs achieved by these schemes, under the $\mathsf{MIL}$ and $\mathsf{MaxL}$ metrics  are also presented \cite{WSJPIR2024}. 

In this subsection, we provide a brief review of the Banawan-Ulukus-WPIR scheme $\wbu$ for MDS storage from \cite{WSJPIR2024}. In this setting, the servers are assumed to be non-colluding, and the $M$ files are stored in the $N$ servers upon encoding using a $K$-dimensional MDS code of length $N$. 


As in the Banawan-Ulukus setting \cite{banawan2018capacity}, the files are considered as matrices of size $N^M \times K$, where each file is encoded using an $(K, N)$-MDS code, which gives us an $N^M \times N$ matrix of encoded symbols. Each column of this matrix is then stored in each of the $N$ servers. Consider that the desired file is $W_k$. The client selects the value of a random variable $M'\in[0:M-1]$ according to some distribution $\Prob_{M'}$. Here, $M'$ denotes the number of undesired files involved in the execution of the protocol. 

If $M'=0$, the client selects $K$ out of $N$ servers uniformly at random and directly demands the desired file from all of the chosen servers. If $M'=m'\in[1:M-1]$, then the client selects a non-empty subset of $J$ of $m'$ undesired file indices uniformly at random from $[1:M]\setminus\{k\}$. The client then executes the Banawan-Ulukus scheme \cite{banawan2018capacity}, for $N$ servers, involving \textit{only} the subset of $\{W_j:j\in J\}\cup\{W_k\}$ of $m'+1$ files. The subpacketization ($N^K$) of the Banawan-Ulukus scheme is suitably large to suit such an execution (see \cite{WSJPIR2024} for more details). 


The $\wbu$ scheme achieves (see \cite[Subsection III.C]{WSJPIR2024}) the following rate, and privacy leakages $\mileak$ and $\maxleak$, for any chosen distribution $\Prob_{M'}$. 
\begin{align}
    R_{WBU}(\Prob_{M'}) &= \frac{1-\frac{K}{N}}{1-\mathbb{E}\left[\left(\frac{K}{N}\right)^{M'+1}\right]}\label{eqn:WBU_rate_expression},\\
    \rho^{\mathsf{MIL}}(\Prob_{M'})&=\Prob_{M'}(0)\left(\frac{K\log(M)}{N}\right)\nonumber\\
    &~~~~~+ \sum_{m'=1}^{M-2}\Prob_{M'}(m')\log\left(\frac{M}{m'+1}\right)\label{eqn:WBU_MIL_expression},\\
    \rho^{\mathsf{MaxL}}(\Prob_{M'})&=\log\Biggl(\Prob_{M'}(0)\left(\frac{N+KM-K}{N}\right)\nonumber\\ &~~~~~~+\sum_{m'=1}^{M-1}\Prob_{M'}(m')\left(\frac{M}{m'+1}\right)\Biggr)\label{eqn:WBU_MaxL_expression},
\end{align}
where the expectation in the rate is over $\Prob_{M'}$.
Furthermore, the following achievable rate-privacy trade-off was shown in \cite{WSJPIR2024} using the distribution $\Prob_{M'}(0)=\frac{\rho N}{\log(M)K}=1-\Prob_{M'}(M-1)$ in \eqref{eqn:WBU_rate_expression} and \eqref{eqn:WBU_MIL_expression}. 
\begin{theorem}{\cite[Theorem 3]{WSJPIR2024}}
    \label{thm:WBU_MIL}
    For any $\rho\geq 0$, the $\wbu$ scheme achieves the following rate-privacy trade-off.
    \begin{itemize}
        \item $\mileak=\min\left(\rho,\frac{K\log M}{N}\right)$,
        \item $R=\biggl(1+\left(1-\frac{\rho N}{\log(M)K}\right)_+\left(\frac{K}{N} + \cdots + \left(\frac{K}{N}\right)^{M-1}\right)\biggl)^{-1}$,
    \end{itemize}
    where $(x)_{+} \triangleq \max(x,0)$.
\end{theorem}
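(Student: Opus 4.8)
The plan is to prove this purely as an achievability statement: exhibit one explicit choice of the client's distribution $\Prob_{M'}$ and substitute it into the general $\wbu$ formulas \eqref{eqn:WBU_rate_expression} and \eqref{eqn:WBU_MIL_expression}, which we may take as already established for an arbitrary $\Prob_{M'}$. Concretely, I would use a two-point distribution supported on $\{0,M-1\}$: for $0\le\rho<\frac{K\log M}{N}$, set $\Prob_{M'}(0)=\frac{\rho N}{K\log M}$ and $\Prob_{M'}(M-1)=1-\frac{\rho N}{K\log M}$; for $\rho\ge\frac{K\log M}{N}$, set $\Prob_{M'}(0)=1$; and in both cases $\Prob_{M'}(m')=0$ for $1\le m'\le M-2$. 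This is exactly the distribution named just before the statement, and it mirrors the replicated-storage choice behind Theorem~\ref{thm:WSJ_MIL} with $1/N$ replaced by $K/N$.

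For the leakage, observe that the summation in \eqref{eqn:WBU_MIL_expression} runs only up to $m'=M-2$, so with the above support the entire sum vanishes (consistently, the $m'=M-1$ term would be $\log(M/M)=0$ anyway), leaving $\mileak(\wbu)=\Prob_{M'}(0)\cdot\frac{K\log M}{N}$. Plugging in $\Prob_{M'}(0)$ then gives $\mileak=\rho$ in the regime $\rho<\frac{K\log M}{N}$ and $\mileak=\frac{K\log M}{N}$ otherwise, i.e. $\mileak=\min\!\big(\rho,\frac{K\log M}{N}\big)$, as claimed.

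For the rate, write $r:=K/N$. Under the two-point distribution, $\expec\big[r^{M'+1}\big]=\Prob_{M'}(0)\,r+(1-\Prob_{M'}(0))\,r^{M}$, so \eqref{eqn:WBU_rate_expression} gives
\[
R_{WBU}=\frac{1-r}{\,1-\Prob_{M'}(0)\,r-(1-\Prob_{M'}(0))\,r^{M}\,}.
\]
I would then apply the geometric-sum identity $r+r^{2}+\cdots+r^{M-1}=\frac{r-r^{M}}{1-r}$ to check that the denominator equals $(1-r)\big(1+(1-\Prob_{M'}(0))(r+\cdots+r^{M-1})\big)$, which immediately yields
\[
R_{WBU}=\Big(1+(1-\Prob_{M'}(0))\big(\tfrac{K}{N}+\cdots+(\tfrac{K}{N})^{M-1}\big)\Big)^{-1}.
\]
Substituting $1-\Prob_{M'}(0)=\big(1-\tfrac{\rho N}{K\log M}\big)_{+}$ recovers the stated rate expression.

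I do not expect a genuine obstacle here; the only point needing care is the truncation at the threshold $\rho=\frac{K\log M}{N}$: once $\frac{\rho N}{K\log M}$ would exceed $1$, the mixing weight $1-\Prob_{M'}(0)$ must be set to $0$ rather than allowed to go negative, which is precisely what the $(\cdot)_{+}$ in the rate and the $\min$ in the leakage encode. Everything else reduces to the one algebraic identity above together with the fact that \eqref{eqn:WBU_rate_expression}--\eqref{eqn:WBU_MIL_expression} hold for all $\Prob_{M'}$.
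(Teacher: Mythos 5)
Your proof is correct and follows exactly the route the paper indicates for this cited result: the text just before the theorem names the two-point distribution $\Prob_{M'}(0)=\frac{\rho N}{K\log M}=1-\Prob_{M'}(M-1)$, and your argument simply substitutes it into the general $\wbu$ formulas \eqref{eqn:WBU_rate_expression}--\eqref{eqn:WBU_MIL_expression} and simplifies via the geometric-sum identity, with the $(\cdot)_{+}$ and $\min$ handling the truncation at $\rho=\frac{K\log M}{N}$. This matches the intended achievability argument.
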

\begin{observation}
    \label{obs:for_M_equels_2_MIL}
    For any $N>1$ and $0<K<N$, when $M=2$, this system model restricts $M'$ to support $\{0,1\}$. Thus, $\mileak$ as in \eqref{eqn:WBU_MIL_expression} depends only on $\Prob_{M'}(0)$. Hence, for any prescribed leakage  $\mileak=\rho$, the value of $\Prob_{M'}(0)$ (and therefore $\Prob_{M'}(1)=1-\Prob_{M'}(0)$) is uniquely determined. Consequently, the corresponding PIR rate $R$ as in \eqref{eqn:WBU_rate_expression} is also uniquely fixed as the expression in Theorem~\ref{thm:WBU_MIL}. 
    Therefore, for $M=2$, Theorem~\ref{thm:WBU_MIL} yields the class-wise optimal rate-leakage trade-off. 
\end{observation}

Similarly, an achievable rate-privacy trade-off under the $\mathsf{MaxL}$ metric was obtained using the distribution $\Prob_{M'}(0)=1-\Prob_{M'}(M-1)=\frac{N}{K}\frac{2^{\rho}-1}{M-1}$ in \eqref{eqn:WBU_rate_expression} and \eqref{eqn:WBU_MaxL_expression}. 
\begin{theorem}{\cite[Theorem 4]{WSJPIR2024}}
    \label{thm:WBU_MaxL}
    For any $\rho \geq 0$, the $\wbu$ scheme attains the following rate and $\mathsf{MaxL}$ leakage pairs
    \begin{itemize}
        \item $\maxleak = \min\left(\rho, \log\left(1+K\left(\frac{M-1}{N}\right)\right)\right)$,
        \item $R = \biggl(1+\left(1-\frac{N}{K}\frac{2^{\rho}-1}{M-1}\right)_{+}\left(\frac{K}{N} + \cdots + \left(\frac{K}{N}\right)^{M-1}\right)\biggl)^{-1}$.
    \end{itemize}
\end{theorem}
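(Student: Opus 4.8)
The plan is to verify the claimed pair directly by substituting the prescribed two‑point distribution $\Prob_{M'}(0)=1-\Prob_{M'}(M-1)=\frac{N}{K}\frac{2^{\rho}-1}{M-1}$ into the general expressions \eqref{eqn:WBU_rate_expression} and \eqref{eqn:WBU_MaxL_expression} that were already recorded for the $\wbu$ scheme. First I would check that this is a legitimate choice of $\Prob_{M'}$: setting $\alpha\triangleq\frac{N}{K}\frac{2^{\rho}-1}{M-1}$, we have $\alpha\ge 0$, and $\alpha\le 1$ precisely when $\rho\le\log\bigl(1+K\frac{M-1}{N}\bigr)$. In that regime we place mass $\alpha$ on $M'=0$ and mass $1-\alpha$ on $M'=M-1$ (both valid indices since $M\ge 2$); when $\rho$ exceeds the threshold we simply clip, i.e.\ replace $\rho$ by the threshold so that $\alpha=1$ and $M'=0$ deterministically.

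Next I would compute $\maxleak$. Substituting the distribution into \eqref{eqn:WBU_MaxL_expression}, only the $m'=0$ and $m'=M-1$ summands survive, and the $m'=M-1$ term has coefficient $\frac{M}{(M-1)+1}=1$. Hence the argument of the logarithm collapses to $1+\alpha\bigl(\frac{N+KM-K}{N}-1\bigr)=1+\alpha\frac{K(M-1)}{N}$, and plugging in $\alpha$ gives $\alpha\frac{K(M-1)}{N}=2^{\rho}-1$, so $\maxleak=\log(2^{\rho})=\rho$ below the threshold. In the clipped regime $\alpha=1$ gives $\maxleak=\log\bigl(1+K\frac{M-1}{N}\bigr)$. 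Combining the two cases yields $\maxleak=\min\bigl(\rho,\log(1+K\frac{M-1}{N})\bigr)$.

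Then I would evaluate the rate via \eqref{eqn:WBU_rate_expression}. With the same distribution, $\mathbb{E}\bigl[(K/N)^{M'+1}\bigr]=\alpha\frac{K}{N}+(1-\alpha)\bigl(\frac{K}{N}\bigr)^{M}$, so the denominator of $R_{WBU}$ is $1-\alpha\frac{K}{N}-(1-\alpha)\bigl(\frac{K}{N}\bigr)^{M}$. Writing $x=K/N$ and using the geometric identity $(1-x)(x+x^{2}+\cdots+x^{M-1})=x-x^{M}$, a short cross‑multiplication shows $\frac{1-x}{1-\alpha x-(1-\alpha)x^{M}}=\bigl(1+(1-\alpha)(x+\cdots+x^{M-1})\bigr)^{-1}$, which is exactly the stated $R$ once one observes that $1-\alpha=\bigl(1-\frac{N}{K}\frac{2^{\rho}-1}{M-1}\bigr)_{+}$: below the threshold this quantity is already nonnegative, and above it the positive part is $0$, giving $R=1$, consistent with $\alpha=1$ in \eqref{eqn:WBU_rate_expression}.

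I do not anticipate a real obstacle: the substantive content of the theorem is the specific choice of $\Prob_{M'}$ (obtained in \cite{WSJPIR2024} from simulation), and given the already‑established formulas \eqref{eqn:WBU_rate_expression}--\eqref{eqn:WBU_MaxL_expression} the remainder is an algebraic verification. The only point requiring care is the bookkeeping at the threshold $\rho=\log(1+K\frac{M-1}{N})$ — ensuring the two cases ($\alpha\le 1$ versus $\alpha$ clipped to $1$) are treated consistently and that the $(\cdot)_{+}$ in the rate expression matches this clipping — together with the trivial check that the support points $0$ and $M-1$ are valid values of $M'$ for every $M\ge 2$.
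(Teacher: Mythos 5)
Your proposal is correct: Theorem~\ref{thm:WBU_MaxL} is a cited achievability result from \cite{WSJPIR2024}, and the paper itself only records the two-point distribution and the general formulas \eqref{eqn:WBU_rate_expression}, \eqref{eqn:WBU_MaxL_expression}; the verification is precisely the direct substitution and geometric-series simplification you carry out, which also matches the algebra the present paper performs when it reuses this fact inside the proof of Theorem~\ref{thm:new_WBU_MaxL_thm} (writing $R^{*}_{WBU}=\frac{1-K/N}{1-T^{*}}$). Your bookkeeping at the threshold with $(1-\alpha)_{+}$ is the right way to handle the clipping, so nothing is missing.
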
 
\begin{observation}
    \label{obs:for_M_equels_2_MaxL}
    If $M=2$, fixing the $\mathsf{MaxL}$ constraint $\mileak=\rho$ in \eqref{eqn:WBU_MaxL_expression}, it is easy to see that the distribution $\Prob_{M'}$ is uniquely determined as a function of $\rho$, as $\Prob_{M'}(0)=1-\Prob_{M'}(1)=N(2^\rho-1)/K$. Thus, the rate $R$ too attains the exact expression in Theorem~\ref{thm:WBU_MaxL} with $M=2$. Thus, the trade-off characterized in Theorem~\ref{thm:WBU_MaxL} is class-wise optimal for $M=2$. 
\end{observation}

 \begin{note*}
     The time-sharing techniques used in the Sun-Jafar-type protocols $\wsj$ and $\wcsj$ for the collusion-free replicated setup and for the $T$-collusion scenario, respectively, are essentially identical to that of the $\wbu$ protocol. Moreover, we note that the expressions for the rate and privacy metrics (as a function of $\Prob_{M'}$) achieved by the Sun-Jafar-type protocol $\wsj$ for the collusion-free replicated setup can be obtained by using $K=1$ in \eqref{eqn:WBU_rate_expression}, \eqref{eqn:WBU_MIL_expression} and \eqref{eqn:WBU_MaxL_expression}. The achievable trade-offs of $\wsj$ shown in \cite{WSJPIR2024} are thus naturally identical to Theorem~\ref{thm:WBU_MIL} (for $\mathsf{MIL}$) and Theorem~\ref{thm:WBU_MaxL} (for $\mathsf{MaxL}$) with $K=1$. Also, the equivalent results for the $T$-collusion WPIR protocol $\wcsj$ can be obtained by replacing $K$ with the collusion parameter $T$. For more details regarding these protocols and the associated results, please see \cite{WSJPIR2024}. 
 \end{note*} 
\section{Problem Formulation and main results}
\label{sec:problem_formulation_main_results}
Our objective is to characterize the maximum achievable rate within the class of Sun-Jafar-type WPIR schemes subject to a specific privacy leakage constraint. Let $\mathcal{P}$ denote the space of all valid probability distributions $\Prob_{M'}$ over the number of undesired files $M' \in [0:M-1]$. For a specific WPIR setting (replicated, MDS, or $T$-collusion) and a privacy metric $\mathsf{Metric} \in \{\mathsf{MIL}, \mathsf{MaxL}\}$, the retrieval rate $R$ and the leakage $\rho^{\mathsf{Metric}}$ are functions of the distribution $\Prob_{M'}$.
We formulate the rate maximization problem as follows:
\begin{equation}
\label{eqn:general_optimization}
\begin{aligned}
\max_{\Prob_{M'} \in \mathcal{P}} \quad & R(\Prob_{M'}) \\
\textrm{subject to} \quad & \rho^{\mathsf{Metric}}(\Prob_{M'}) \le \bar{\rho}, \\
& \sum_{m'=0}^{M-1} \Prob_{M'}(m') = 1, \quad \Prob_{M'}(m') \ge 0,
\end{aligned}
\end{equation}
where $\bar{\rho}$ is the allowable leakage threshold. In the subsequent sections, we solve this optimization problem for the specific rate and leakage functions corresponding to each WPIR scheme, assuming an additional threshold condition in the case of the MDS-coded servers scenario. 

\subsection{Main Result: On Classwise Optimality of Sun-Jafar-type schemes \cite{WSJPIR2024} under both $\mathsf{MIL}$ and $\mathsf{MaxL}$ constraints}
\label{sec:main_results_for_MIL}
In this subsection, we will prove 
the optimal rate-privacy trade-off of $\wbu$ scheme under the $\mathsf{MIL}$ constraint. Since Observations~\ref{obs:for_M_equels_2_MIL} and \ref{obs:for_M_equels_2_MaxL} show the class-wise optimality of the trade-offs in Theorem~\ref{thm:WBU_MIL} and \ref{thm:WBU_MaxL} for the $M=2$ case, we assume throughout this section that $M>2$. 
We identify the specific distribution of ${M'}$ that maximizes the rate by solving the optimization problem posed in \eqref{eqn:general_optimization}. 

We solve the optimization problem defined in \eqref{eqn:general_optimization} for the $\wbu$ scheme under both $\mathsf{MIL}$ and $\mathsf{MaxL}$ privacy constraints, assuming a threshold condition on the code-rate $K/N$. Specifically, we demonstrate that if $K/N$ is below the specific threshold, the rate-privacy trade-off achieved in Theorem~\ref{thm:WBU_MIL} and Theorem~\ref{thm:WBU_MaxL} are class-wise optimal. The formal result we prove is as follows.

\begin{theorem}
    \label{thm:new_WBU_MIL_thm}
    For $\frac{K}{N}\le0.79587$, $M>2$,
    and any $\rho\in\left[0,\frac{K\log(M)}{N}\right]$, the maximal rate of the $\wbu$ scheme under an $\mathsf{MIL}$ constraint $\mileak\leq \rho$ is 
    \begin{equation*}
        R^{*}_{WBU,\mathsf{MIL}}=\left(1+\left(1-\frac{N\rho}{K\log(M)}\right)\left(\frac{K}{N}+\cdots+\left(\frac{K}{N}\right)^{M-1}\right)\right)^{-1}
    \end{equation*} if and only if the distribution $\Prob_{M'}$ is given as \begin{itemize}
        \item $\Prob_{M'}(0)=\frac{\rho N}{K\log M}=1-\Prob_{M'}(M-1)$ and $\Prob_{M'}(m')=0$ for all $m'\in[1:M-2]$. 
    \end{itemize}
\end{theorem}
\begin{theorem}
    \label{thm:new_WBU_MaxL_thm}
    For $\frac{K}{N}\le0.60199$, $M>2$, 
    and any $\rho\in\left[0,\log\left(1+\frac{N(2^{\rho}-1)}{K(M-1)}\right)\right]$, the maximal rate of the $\wbu$ scheme under an $\mathsf{MaxL}$ constraint $\maxleak\leq \rho$ is 
    \begin{equation*}
        R^{*}_{WBU,\mathsf{MaxL}}=\left(1+\left(1-\frac{N(2^{\rho}-1)}{K(M-1)}\right)\left(\frac{K}{N}+\cdots+\left(\frac{K}{N}\right)^{M-1}\right)\right)^{-1}
    \end{equation*} if and only if the distribution $\Prob_{M'}$ is given as 
    \begin{itemize}
        \item $\Prob_{M'}(0)=\frac{N(2^{\rho}-1)}{K(M-1)}=1-\Prob_{M'}(M-1)$ and $\Prob_{M'}(m')=0$ for all $m'\in[1:M-2]$. 
    \end{itemize}
\end{theorem}

The proof of Theorem \ref{thm:new_WBU_MIL_thm} and Theorem \ref{thm:new_WBU_MaxL_thm} are given in Section \ref{sec:proofofThmWBUMIL} and Section \ref{sec:proofofThmWBUMaxL} respectively. The sketch of the proof is as follows. We observe from \eqref{eqn:WBU_rate_expression}, that maximizing the rate is equivalent to maximizing the expectation term that appears in the denominator of the rate expression, which is linear in terms of $\Prob_{M'}(m')$. We therefore treat this linear function as the objective function, subject to the constraints specified in \eqref{eqn:general_optimization}. By characterizing the distribution of $M'$ that maximizes this objective under the given constraint, we obtain the desired result.


\begin{note*}
Theorems~\ref{thm:new_WBU_MIL_thm} and~\ref{thm:new_WBU_MaxL_thm}, together with their respective threshold conditions, extend to the $T$-collusion scheme $\wcsj$, with the sole modification that $T$ replaces $K$ in the formulation. The constraint for the $T$-collusion case becomes $T/N \leq 0.79587$ for Theorem~\ref{thm:new_WBU_MIL_thm} and $T/N \leq 0.60199$ for Theorem~\ref{thm:new_WBU_MaxL_thm}.

In contrast, for the collusion-free scheme $\wsj$, the storage-rate threshold is satisfied trivially. Since $K=1$ and $N\geq 2$, we have $K/N \leq 0.5$, which is strictly less than both $0.79587$ and $0.60199$, thereby satisfying the requirements of both theorems. Consequently, the rate-privacy tradeoff achieved by the collusion-free protocol $\wsj$ presented in~\cite{WSJPIR2024} (obtained by specializing Theorem~\ref{thm:WBU_MIL} with $K=1$) is shown to be optimal.
\end{note*}

\subsection{Examples illustrating Non-Optimality of Rate-Privacy trade-offs in Theorems~\ref{thm:WBU_MIL} and \ref{thm:WBU_MaxL}}
We now present an example showing that for certain code-rates above the threshold $0.79587$ in Theorem~\ref{thm:WBU_MIL}, we can achieve a rate that is higher than the rate characterized by Theorem~\ref{thm:WBU_MIL}. 

\begin{example}
\label{example:counterex_higherratebeyondthreshold_MIL}
    Consider a $(K=4,N=5)$-MDS coded WPIR system consisting of $N=5$ servers, and $M=4$ files. In this configuration, the code-rate is $\frac{K}{N}=0.8$, which is strictly greater than the threshold code-rate $0.79587$ mentioned in Theorem \ref{thm:new_WBU_MIL_thm}. Note that  $M'\in[0:3]$ in this case. We present two schemes, both achieving the $\mathsf{MIL}$ metric $\rho=1.333$. 
    \begin{itemize}[leftmargin=*]
    \item As per Theorem \ref{thm:WBU_MIL}, the protocol $\wbu$ obtained by choosing the distribution $\Prob_{M'}(0)=1-\Prob_{M'}(3)=\frac{N\rho}{K\log(M)}=\frac{5\times1.333}{4\log(4)}=0.833$, 
    attains the rate which can be calculated using \eqref{eqn:WBU_rate_expression}, which is $R_{WBU,\mathsf{MIL}}(\Prob_{M'})=0.7542$. Plugging the value of $\Prob_{M'}$ in \eqref{eqn:WBU_MIL_expression} we can get $\mileak=1.333$. 
    \item Now, consider a different distribution for $M'$, given by:  $\mathsf{Q}_{M'}(0)=0.779$, $\mathsf{Q}_{M'}(1)=0$, $\mathsf{Q}_{M'}(2)=0.21$, and $\mathsf{Q}_{M'}(3)=0.11$. From \eqref{eqn:WBU_rate_expression}, the rate of the $\wbu$ scheme with distribution $Q_{M'}$ is $R_{WBU,\mathsf{MIL}}(\mathsf{Q}_{M'})=0.7554$. Putting the values of $\mathsf{Q}_{M'}$ in \eqref{eqn:WBU_MIL_expression} we will get $\mileak=1.333$. 
    \end{itemize}
    It is clear that $R_{WBU,\mathsf{MIL}}(\mathsf{Q}_{M'})>R_{WBU,\mathsf{MIL}}(\Prob_{M'})$. Thus, in a $\wbu$ scheme for a given system model, if the code-rate is greater than the threshold, then there is room for improving the maximum possible rate, for a fixed $\mathsf{MIL}$ leakage.
\end{example}
\begin{remark}
    \label{rem:WBU_remark_MIL}
    We remark on the choice of the distribution $Q_{M'}$ in Example \ref{example:counterex_higherratebeyondthreshold_MIL}. This distribution $Q_{M'}$ is designed by carefully observing the values $m'$ for which the quantities $D_{m'}:m'\in[1:M-2]$ (in the proof of Theorem \ref{thm:new_WBU_MIL_thm}) become strictly positive. From the arguments in the proof, such scenarios potentially lead to a different distribution for $M'$ yielding higher rates than Theorem \ref{thm:WBU_MIL}. Specifically, for the parameters in Example \ref{example:counterex_higherratebeyondthreshold_MIL} with $\mathsf{MIL}$ metric being $1.333$ and $K/N=0.8>0.79587$, the values $D_{m'}:m'\in\{1,2\}$ can be obtained from \cite[equation (16)]{WSJPIR2024} as $D_1=-0.0136$ and $D_2=0.0011$. Since $D_2>0$, it makes sense to consider a distribution $\mathsf{Q}_{M'}$ with support $\{0,2,3\}$ (note that the distribution in Theorem \ref{thm:WBU_MIL} has support only $\{0,3\}$). The exact choice of  $Q_{M'}$ in Example~\ref{example:counterex_higherratebeyondthreshold_MIL} is chosen by observation. 
\end{remark}

Similar to Example \ref{example:counterex_higherratebeyondthreshold_MIL} for $\mathsf{MIL}$, we observe the same in the $\mathsf{MaxL}$ case as follows. 

\begin{example}
    \label{example:counterex_higherratebeyondthreshold_MaxL}

    Consider the setting with  $N=3, K=2$, and $ M=4$. Thus, the rate of the storage-code is $2/3$, which is greater than the threshold in Theorem \ref{thm:new_WBU_MaxL_thm}. We present two distributions for $M'$, both having the same $\mathsf{MaxL}$ metric $\rho=1.039$, however with different PIR rates. 
    
    \begin{itemize}
        \item As per Theorem~\ref{thm:WBU_MaxL}, the $\wbu$ scheme with the distribution $\Prob_{M'}(0)=1-\Prob_{M'}(3) =\frac{N}{K}\frac{2^{\rho}-1}{M-1}=0.533$, achieves the rate, as per \eqref{eqn:WBU_rate_expression}, as $R_{WBU,\mathsf{MaxL}}(\Prob_{M'})=0.6076$. Plugging the values of $\Prob_{M'}$ into \eqref{eqn:WBU_MaxL_expression} and we see that $\maxleak=1.039$. 
        \item We now consider another distribution for $M'$ given by:
        $\mathsf{Q}_{M'}(0)=0.479$, $\mathsf{Q}_{M'}(1)=0.107$, $\mathsf{Q}_{M'}(2)=0$, and $\mathsf{Q}_{M'}(3)=0.414$. From \eqref{eqn:WBU_rate_expression}, we calculate the rate with this distribution as $R_{WBU,\mathsf{MaxL}}(\mathsf{Q}_{M'})=0.6086$. Using the values of $\mathsf{Q}_{M'}$ in \eqref{eqn:WBU_MaxL_expression} gives $\maxleak=1.039$.
    \end{itemize}
    It is evident that $R_{WBU,\mathsf{MaxL}}(\mathsf{Q}_{M'})>R_{WBU,\mathsf{MaxL}}(\Prob_{M'})$, while the $\mathsf{MaxL}$ leakage is the same.  Thus, the trade-off achieved in Theorem \ref{thm:WBU_MaxL} by the $\wbu$ scheme is not optimal for all values of the storage-rate $K/N$. 
\end{example}

\section{Proof of Theorem \ref{thm:new_WBU_MIL_thm}: Optimality of Theorem \ref{thm:WBU_MIL} under $\mathsf{MIL}$}
\label{sec:proofofThmWBUMIL}
In this section, we will provide the proof of Theorem \ref{thm:new_WBU_MIL_thm}. Since the maximal achievable rate $R_{WBU,\mathsf{MIL}}^*$ in the statement of Theorem \ref{thm:new_WBU_MIL_thm} is increasing with $\rho$, it is sufficient to establish the result for  $\mileak=\rho$. From Theorem \ref{thm:WBU_MIL}, the distribution: $\Prob_{M'}(0)=\frac{N\rho}{K\log(M)}=1-\Prob_{M'}(M-1)$ and $\Prob_{M'}(m')=0$ for all $m'\in[1:M-2]$,  achieves the rate $R_{WBU,\mathsf{MIL}}(\Prob_{M'})=R_{WBU,\mathsf{MIL}}^{*}$, while also attaining $\mileak(\Prob_{M'})=\rho$. However, it remains to show that this is the unique distribution that attains the maximum rate under the same constraint $\mileak=\rho$. This will complete the proof of the theorem.

Let $\mathbf{p}=(p(0), p(1), \cdots, p(M-1))$ be an arbitrary probability distribution on $[0:M-1]$ which satisfies the leakage constraint, i.e., $\mileak(\mathbf{p})=\rho$. Thus, from \eqref{eqn:WBU_MIL_expression}, we have 
    \begin{equation}
        \rho=p(0)\frac{K\log(M)}{N}+\sum_{m'=1}^{M-2}p(m')\log\left(\frac{M}{m'+1}\right)\label{eqn:WBU_rho_value_thm}
    \end{equation}

    Define $b_0=\frac{K\log(M)}{N}$ and $b_{m'}=\log\left(\frac{M}{m'+1}\right)$ for $m'\in[1:M-2]$. Using \eqref{eqn:WBU_rho_value_thm} we can write $p(0)$ as following
    \begin{equation}
        p(0)=\frac{1}{b_0}\left(\rho-\sum_{m'=1}^{M-2}p(m')b_{m'}\right)\label{eqn:WBU_p(0)_value}
    \end{equation}
    
    From \eqref{eqn:WBU_rate_expression}, we can conclude that, maximizing the rate of $\wbu$ scheme (or $R_{WBU,\mathsf{MIL}}$) is equivalent to maximizing $T(\mathbf{p})\triangleq\mathbb{E}\left[\left(\frac{K}{N}\right)^{M'+1}\right]=\sum_{m'=0}^{M-1}p(m')c_{m'}$ where $c_{m'}=\left(\frac{K}{N}\right)^{m'+1}$.

    Now, denote $T^*\triangleq c_{M-1} + \frac{\rho}{b_0}(c_0 - c_{M-1})=\left(\frac{K}{N}\right)^{M}+\frac{N\rho}{\log M}\left(\frac{K}{N}-\left(\frac{K}{N}\right)^{M}\right)$. 
Then, from the expression given for $R^*_{WBU,\mathsf{MIL}}$ in Theorem \ref{thm:new_WBU_MIL_thm}, we get $R^*_{WBU,\mathsf{MIL}}=\frac{1-\frac{K}{N}}{1-T^*}$, as the following sequence shows. 
\begin{align*}
    R^{*}_{WBU,\mathsf{MIL}}
    &= \left(1+\left(1-\frac{N\rho}{K\log M}\right) \frac{\frac{K}{N}-\left(\frac{K}{N}\right)^{M}}{1-\frac{K}{N}}\right)^{-1} \nonumber \\
    &= \left(\frac{(1-\frac{K}{N}) + (1-\frac{N\rho}{K\log M})\left(\frac{K}{N}-\left(\frac{K}{N}\right)^{M}\right)}{1-\frac{K}{N}}\right)^{-1} \nonumber \\
    &= \left(\frac{1 - \frac{K}{N} + \frac{K}{N} - \left(\frac{K}{N}\right)^{M} - \frac{N\rho}{K\log M}(\frac{K}{N}-\left(\frac{K}{N}\right)^{M})}{1-\frac{K}{N}}\right)^{-1} \nonumber \\
    &= \left(\frac{1 - \left[\left(\frac{K}{N}\right)^{M} + \frac{N\rho}{K\log M}\left(\frac{K}{N}-\left(\frac{K}{N}\right)^{M}\right)\right]}{1-\frac{K}{N}}\right)^{-1} \nonumber \\
    &= \left(\frac{1-T^*}{1-K/N}\right)^{-1} = \frac{1-K/N}{1-T^*}
\end{align*}
Thus, setting $T=T^*$ achieves rate $R^*_{WBU,\mathsf{MIL}}$. Now, substituting $p(0)$ from \eqref{eqn:WBU_p(0)_value}, and using the fact that $p({M-1})=1-\sum_{m'=0}^{M-2}p(m')$, we can write the following. 
    \begin{align} 
        T(\mathbf{p}) &=\sum_{m'=0}^{M-2} p(m') c_{m'} +\left(1-\sum_{m=0}^{M-2}p(m')\right)c_{M-1}\nonumber\\
        &=c_{M-1}+p(0)\left(c_{0}-c_{M-1}\right) + \sum_{m'=1}^{M-2}p(m')\left(c_{m'}-c_{M-1}\right)\nonumber\\
        &=c_{M-1}+\frac{1}{b_0}\left(\rho-\sum_{m'=1}^{M-2}p(m')b_{m'}\right)\left(c_{0}-c_{M-1}\right)\nonumber\\
        &~~~~~~+\sum_{m'=1}^{M-2}p(m')\left(c_{m'}-c_{M-1}\right)\nonumber\\
        &=\left[ c_{M-1} + \frac{\rho}{b_0}(c_0 - c_{M-1}) \right]\nonumber\\
        &~~~~+ \sum_{m'=1}^{M-2} p(m') \underbrace{\left[ (c_{m'} - c_{M-1}) - \frac{b_{m'}}{b_0}(c_0 - c_{M-1}) \right]}_{D_{m'}}\nonumber\\
        &= T^* + \sum_{m'=1}^{M-2} p(m') \underbrace{\left[ (c_{m'} - c_{M-1}) - \frac{b_{m'}}{b_0}(c_0 - c_{M-1}) \right]}_{D_{m'}}. \label{eq:T_expanded}
    \end{align}

    Now, if we can show that the terms $D_{m'}:m'\in[1:M-2]$ are negative, then, clearly, to maximize $T(\mathbf{p})$, we must fix  $p(m')=0, \forall m'\in[1:M-2]$. This will prove that the support of $\Prob_{M'}$ that maximizes $T(\mathbf{p})$ must be only $\{0,M-1\}$. Amongst all such distributions, the only distribution which gives $\mileak=\rho$ is the distribution $\Prob_{M'}(0)=\frac{\rho N}{K\log M}=1-\Prob_{M'}(M-1)$. Thus, showing that $D_{m'}:m'\in[1:M-2]$ are negative will complete the proof, and this is what we do now.
    
    \begin{align}
        D_{m'}&= (c_{m'} - c_{M-1}) - \frac{b_{m'}}{b_0}(c_0 - c_{M-1})\nonumber\\
        &=\left(\left(\frac{K}{N}\right)^{m'+1}-\left(\frac{K}{N}\right)^{M}\right) - \frac{\log(\frac{M}{m'+1})}{\left(K\frac{\log(M)}{N}\right)}\left(\frac{K}{N}-\left(\frac{K}{N}\right)^{M}\right)\nonumber\\
        &=\left(\left(\frac{K}{N}\right)^{m'+1}-\left(\frac{K}{N}\right)^{M}\right) - \frac{N\log(\frac{M}{m'+1})}{K\log(M)}\left(\frac{K}{N}-\left(\frac{K}{N}\right)^{M}\right)\nonumber\\
        &=\left(\left(\frac{K}{N}\right)^{m'+1}-\left(\frac{K}{N}\right)^{M}\right) - \left(1-\frac{\log(m'+1)}{\log(M)}\right)\left(1-\left(\frac{K}{N}\right)^{M-1}\right)\nonumber\\
        &=\left(1-\left(\frac{K}{N}\right)^{M-1}\right)\Biggl(\frac{\left(\left(\frac{K}{N}\right)^{m'+1}-\left(\frac{K}{N}\right)^{M}\right)}{\left(1-\left(\frac{K}{N}\right)^{M-1}\right)}- \left(1-\frac{\log(m'+1)}{\log(M)}\right)\Biggr).\label{eqn:WBU_MIL_D_m'_criteria}
    \end{align}

    To prove the desired that $D_{m'}$ is strictly negative, it is sufficient to prove that the function 
    \begin{equation}
        \milg(q, m',M)\triangleq\frac{q^{m'+1}-q^{M}}{1-q^{M-1}}- \left(1-\frac{\log(m'+1)}{\log(M)}\right).\label{eqn:MIL_g_function}
    \end{equation}
    is strictly negative for $q\le0.79587$, $M>2$ and $m'\in[1:M-2]$.
    
    We begin by establishing the monotonicity of $\milg$ with respect to $q$. For fixed $m'$ and $M$, define \begin{equation}
    \label{eqn:hMIL_auxfunction_defn}
        \milh(q)\triangleq q^{m'+1}\left(\frac{1-q^{M-m'-1}}{1-q^{M-1}}\right).\nonumber
    \end{equation}
    Then $\milg(q,m',M)=\milh(q)-(1-\frac{\ln(m'+1)}{\ln(M)})$, and except $\milh(q)$ remaining terms in $\milg$ are independent of $q$, so monotonicity of $\milg$ in $q$ is equivalent to monotonicity of $\milh$ in $q$.
    Now, differentiating $\milh(q)$ with respect to $q$ we have \begin{equation}
        \frac{\partial \milh(q)}{\partial q}=\underbrace{(m'+1)q^{m'}\frac{1-q^{M-m'-1}}{1-q^{M-1}}}_{T_1}+q^{m'+1}\underbrace{\frac{\partial}{\partial q}\left(\frac{1-q^{M-m'-1}}{1-q^{M-1}}\right)}_{T_2}\label{eqn:g_increases_with_q}.
    \end{equation}

    For a fixed $M>2$ and $m'\in[1:M-2]$, and  $q\in(0,1)$, the term $T_1$ of \eqref{eqn:g_increases_with_q} is positive. Further, using Lemma~\ref{lemma:f_increases_with_x_WBU_MIL_inter} (see Appendix~\ref{appendix:usefullemmas}) with $x=q$, $a=M-m'-1$, and $b=M-1$, it is easy to see that the term $T_2$ of \eqref{eqn:g_increases_with_q} is also positive, thus $\frac{\partial \milh(q)}{\partial q}>0$. 
    This implies $\milh$ is an increasing function in $q$, consequently, we can conclude $\milg$ is an increasing function in $q$. Thus, for any valid $q_1$ and $q_2$ such that 
    $0<q_1 < q_2 <1$ we have $\milg(q_1, m', M)< \milg(q_2, m', M)$. 
    
    Let $\milq=0.79587$. Since $\milg$ is increasing in $q$ for $q\in(0,1)$ for any $M>2$ and $m'\in[1:M-2]$, to complete the proof of Theorem~\ref{thm:new_WBU_MIL_thm}, it is thus sufficient to show that $\milg(\milq,m',M)<0$ for each $M>2$ and every choice of $m'\in[1:M-2]$. The behavior of $\milg(\milq,m',M)$ is depicted in Figure~\ref{fig:mil_g} for various values of $M$ and for $m'\in[0,M-1]$.

    The proof completes in two parts:
    first for the range $3\le M< 14$, and then for the remaining case $M\ge14$.  
    
    \textbf{Case 1: $M\in[3:14]$:} 
    In this case, it is sufficient to show that the value of $\milg(\milq, m', M)$ is negative, for all valid pairs of $(m', M)$ for $M\in[3:14]$. We show this in Table~\ref{tab:MIL_M_less_than_14}. 

    \begin{table}[h!]
    \centering
    \scriptsize
    \renewcommand{\arraystretch}{1.18}
    \caption{Values of $\milg(\milq,\, m',\, M)$ (here $\milq=0.795877$) for all valid $m' \in [1:M-2]$, and  $M \in [3:13]$.
    All entries are negative, confirming $D_{m'} < 0$ throughout.}
    \label{tab:MIL_M_less_than_14}
    \setlength{\tabcolsep}{5pt}
    \begin{tabular}{cc}

    \begin{tabular}{ccS[table-format=1.5]S[table-format=1.5]S[table-format=-1.5]}
    \toprule
    $M$ & $m'$ & {$\frac{\milq^{m'+1}-\milq^M}{1-\milq^{M-1}}$} & {$1-\frac{\log(m'+1)}{\log M}$} & {$\milg$} \\
    \midrule
    3 & 1 & 0.35271 & 0.36907 & -0.01636 \\
    \midrule
    4 & 1 & 0.46826 & 0.50000 & -0.03174 \\
    4 & 2 & 0.20752 & 0.20752 & {$-0.00000$} \\
    \midrule
    5 & 1 & 0.52456 & 0.56932 & -0.04476 \\
    5 & 2 & 0.30863 & 0.31739 & -0.00876 \\
    5 & 3 & 0.13678 & 0.13865 & -0.00187 \\
    \midrule
    6 & 1 & 0.55721 & 0.61315 & -0.05594 \\
    6 & 2 & 0.36726 & 0.38685 & -0.01960 \\
    6 & 3 & 0.21608 & 0.22629 & -0.01022 \\
    6 & 4 & 0.09576 & 0.10176 & -0.00600 \\
    \midrule
    7 & 1 & 0.57807 & 0.64379 & -0.06573 \\
    7 & 2 & 0.40471 & 0.43542 & -0.03071 \\
    7 & 3 & 0.26675 & 0.28759 & -0.02084 \\
    7 & 4 & 0.15694 & 0.17291 & -0.01597 \\
    7 & 5 & 0.06955 & 0.07922 & -0.00967 \\
    \midrule
    8 & 1 & 0.59223 & 0.66667 & -0.07444 \\
    8 & 2 & 0.43015 & 0.47168 & -0.04153 \\
    8 & 3 & 0.30116 & 0.33333 & -0.03218 \\
    8 & 4 & 0.19849 & 0.22602 & -0.02753 \\
    8 & 5 & 0.11678 & 0.13835 & -0.02156 \\
    8 & 6 & 0.05176 & 0.06422 & -0.01246 \\
    \midrule
    9 & 1 & 0.60225 & 0.68454 & -0.08228 \\
    9 & 2 & 0.44815 & 0.50000 & -0.05185 \\
    9 & 3 & 0.32550 & 0.36907 & -0.04357 \\
    9 & 4 & 0.22789 & 0.26751 & -0.03962 \\
    9 & 5 & 0.15020 & 0.18454 & -0.03433 \\
    9 & 6 & 0.08837 & 0.11438 & -0.02601 \\
    9 & 7 & 0.03916 & 0.05361 & -0.01444 \\
    \midrule
    10 & 1 & 0.60955 & 0.69897 & -0.08942 \\
    10 & 2 & 0.46125 & 0.52288 & -0.06163 \\
    10 & 3 & 0.34323 & 0.39794 & -0.05471 \\
    10 & 4 & 0.24930 & 0.30103 & -0.05173 \\
    10 & 5 & 0.17454 & 0.22185 & -0.04731 \\
    10 & 6 & 0.11504 & 0.15490 & -0.03987 \\
    10 & 7 & 0.06768 & 0.09691 & -0.02923 \\
    10 & 8 & 0.02999 & 0.04576 & -0.01576 \\

    \bottomrule
    \end{tabular}
    &

    \begin{tabular}{ccS[table-format=1.5]S[table-format=1.5]S[table-format=-1.5]}
    \toprule
    $M$ & $m'$ & {$\frac{\milq^{m'+1}-\milq^M}{1-\milq^{M-1}}$} & {$1-\frac{\log(m'+1)}{\log M}$} & {$\milg$} \\
    \midrule

    11 & 1 & 0.61497 & 0.71094 & -0.09596 \\
    11 & 2 & 0.47100 & 0.54184 & -0.07085 \\
    11 & 3 & 0.35641 & 0.42187 & -0.06546 \\
    11 & 4 & 0.26521 & 0.32881 & -0.06360 \\
    11 & 5 & 0.19263 & 0.25278 & -0.06015 \\
    11 & 6 & 0.13486 & 0.18849 & -0.05363 \\
    11 & 7 & 0.08889 & 0.13281 & -0.04392 \\
    11 & 8 & 0.05230 & 0.08369 & -0.03139 \\
    11 & 9 & 0.02318 & 0.03975 & -0.01657 \\
    \midrule
    12 & 1 & 0.61907 & 0.72106 & -0.10199 \\
    12 & 2 & 0.47836 & 0.55789 & -0.07953 \\
    12 & 3 & 0.36636 & 0.44211 & -0.07575 \\
    12 & 4 & 0.27723 & 0.35231 & -0.07508 \\
    12 & 5 & 0.20630 & 0.27894 & -0.07265 \\
    12 & 6 & 0.14984 & 0.21691 & -0.06707 \\
    12 & 7 & 0.10490 & 0.16317 & -0.05827 \\
    12 & 8 & 0.06914 & 0.11577 & -0.04663 \\
    12 & 9 & 0.04068 & 0.07337 & -0.03269 \\
    12 & 10 & 0.01803 & 0.03502 & -0.01699 \\
    \midrule
    13 & 1 & 0.62220 & 0.72976 & -0.10756 \\
    13 & 2 & 0.48398 & 0.57168 & -0.08770 \\
    13 & 3 & 0.37397 & 0.45952 & -0.08555 \\
    13 & 4 & 0.28642 & 0.37253 & -0.08611 \\
    13 & 5 & 0.21674 & 0.30144 & -0.08471 \\
    13 & 6 & 0.16128 & 0.24135 & -0.08007 \\
    13 & 7 & 0.11714 & 0.18929 & -0.07215 \\
    13 & 8 & 0.08201 & 0.14337 & -0.06135 \\
    13 & 9 & 0.05405 & 0.10229 & -0.04823 \\
    13 & 10 & 0.03180 & 0.06513 & -0.03333 \\
    13 & 11 & 0.01409 & 0.03121 & -0.01711 \\

    \bottomrule
    \end{tabular}
    \end{tabular}
    \end{table}

    \textbf{Case 2: $M\ge14$:} We now consider the remaining case $M\ge14$. 
    First, we analyze the behavior of $\milg$, assuming that $\milg$ is a continuous function of $m'$ over the interval $(-1, M-1]$.
    The partial derivative of $\milg$ with respect to $m'$ is given by the following

    \begin{align}
        \frac{\partial \milg(\milq, m', M)}{\partial m'} &=\frac{\milq^{m'+1}\ln(\milq)}{1-q^{M-1}} + \frac{1}{(m'+1)\ln(M)}\nonumber\\
        &=-\frac{\milq^{m'+1}\ln(1/\milq)}{1-\milq^{M-1}} + \frac{1}{(m'+1)\ln(M)}\nonumber\\
        &=\frac{\ln(1/\milq)}{(1-\milq^{M-1})(m'+1)}\left(\frac{1-\milq^{M-1}}{\ln(M)\ln(1/\milq)}-(m'+1)\milq^{m'+1}\right).\label{eqn:MIL_partial_deriv_in_m}
    \end{align}

    The sign of \eqref{eqn:MIL_partial_deriv_in_m} depends on the term in the parentheses. We define two functions $\milC(\milq, M)\triangleq\frac{1-\milq^{M-1}}{\ln(M)\ln(1/\milq)}$ and $\milf(m')\triangleq(m'+1)\milq^{m'+1}$. 
   \begin{observation}   
   \label{obs:milcmilf}
    We have the following observations to be true: 
    \begin{enumerate}[label=\alph*)]
            \item\label{obs:MIL_C} $\milC(\milq, M)>0$ as $M\ge14$, and further $\milC(\milq, M)$ is a constant with respect to $m'$.
            \item  By definition, the function $\milf(m')$ is positive for $m'>-1$. Further, we have $\milf(-1^+)=0$, and also $\lim_{m'\rightarrow\infty}\milf(m')=0$.  Since $\frac{\partial \milf(m')}{\partial m'}=\milq^{m'+1}(1+(m'+1)\ln(\milq))$, the function $\milf$ has a unique maximum value in the range $(-1,\infty)$ only at $m_0=\frac{1}{\ln(1/\milq)}-1=3.3798$, which is $\milf(m_0)=1.6112$. Thus, $\milf$ is increasing in the regime $m'\in[-1,m_0]$ and decreasing in $m'\in[m_0,\infty)$.
        \end{enumerate}
    \end{observation}
    We now analyze two sub-cases based on the relationship between $\milC(\milq,M)$ and $\milf(m_0)$.
    \textbf{Case 2(a): $\milC(\milq,M)\ge\milf(m_0):$} If $\milC(\milq,M)\ge\milf(m_0)>\milf(m')$ (by Observation \ref{obs:milcmilf}, this last inequality holds), then from \eqref{eqn:MIL_partial_deriv_in_m}, we see that $\frac{\partial \milg}{\partial m'}\geq 0$ for $m'=m_0$ and $\frac{\partial \milg}{\partial m'}>0$ for remaining range $m'\in(-1,M-1]\backslash\{m_0\}$. Hence, $\milg$ is monotonically non-decreasing in $m'$, and further we can obtain $\lim_{m'\rightarrow-1^+}\milg(\milq, m', M)=-\infty$ and $\milg(\milq,M-1,M)=0$. Combining all the above properties, it follows that $\milg$ must be negative for all $m'\in[1:M-2]$, concluding the proof in this sub-case.

   \begin{figure}[htbp]
        \centering
        
        \subfloat[\label{fig:mil_c_f}]{%
            \includegraphics[width=0.49\linewidth]{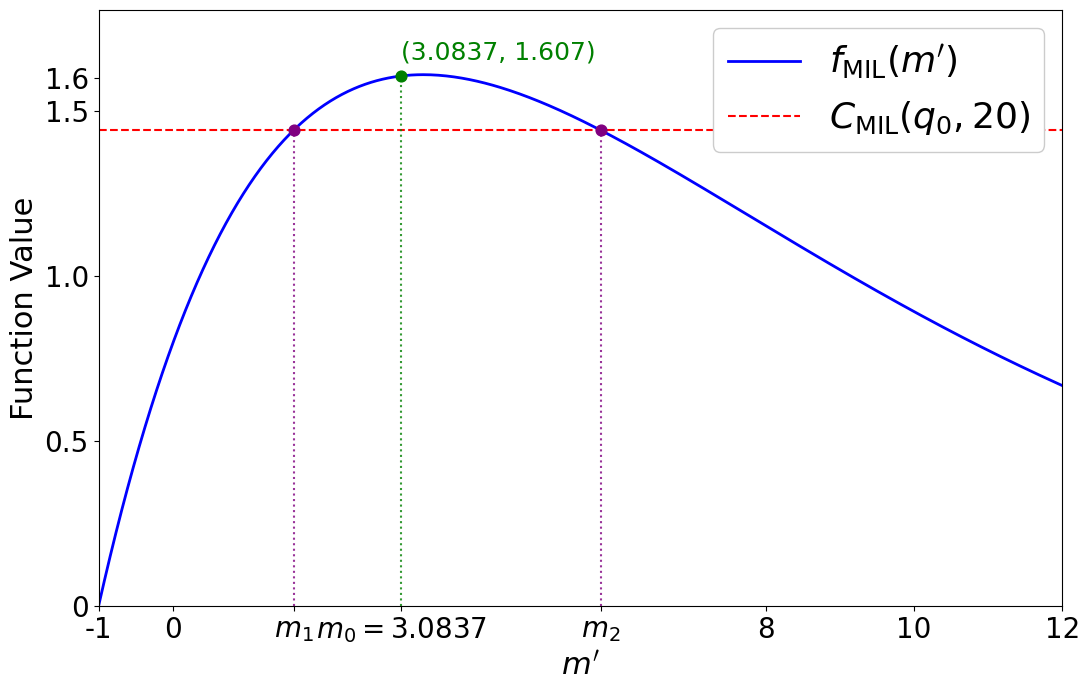}%
        }
        \hfill 
        \subfloat[\label{fig:mil_g}]{%
            \includegraphics[width=0.49\linewidth]{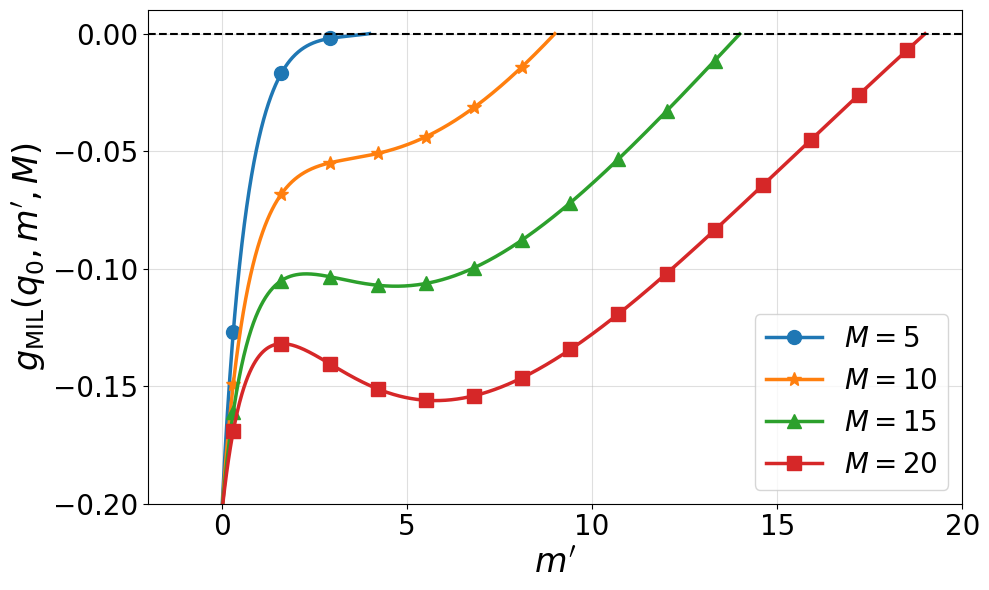}%
        }
        
        \caption{Figure~\ref{fig:mil_c_f} shows the behavior of $\milC(\milq,M)$ and $\milf(m')$ as functions of $m'\ge0$, calculated at $\milq=0.795877$ with $M=20$.  
        The behavior of $\maxlg(\milq,m',M)$ on $m'\in[0,M-1]$ across varying system scales is examined in Figure~\ref{fig:mil_g}, where $M\in\{5,10,15,20\}$ at the same fixed $\milq=0.795877$. 
        }
        \label{fig:mil_combined}
    \end{figure}
    
   \textbf{Case 2(b): $\milC(q, M)< \milf(m_0):$}  
   First, we make the following observation, which holds in this sub-case. 
   \begin{observation}
       \label{obs:structureofgmilatm1m2_Cmil<fmil-sub-case}
       If $\milC(q, M)<\milf(m_0)$, as depicted in Figure~\ref{fig:mil_c_f}, then by the structure of $\milf(m')$ as in Observation \ref{obs:milcmilf}, the equation $\milf(m')=\milC(q, M)$ (as an equation in $m'$) has exactly two roots, say $m_1$ and $m_2$, which satisfy $m_1<m_0<m_2$. It must be the case that $m_1< m_0 =3.3798$, and $M\ge14$, therefore we have that $m_1<M-1$. Further, as $\milf$ has a unique maximum at $m_0$, we see that $\milC-\milf$ (and thus $\frac{\partial \milg}{\partial m'}$, by \eqref{eqn:MIL_partial_deriv_in_m}) is positive in $m'\in[-1,m_1)\cup(m_2,\infty)$ and negative for $m'\in(m_1,m_2)$, in this sub-case. 
       Further, $\milg$ has a local maxima at $m'=m_1$, where $m_1\in[-1,m_0)$, and a local minima at $m'=m_2$, where $m_2\in(m_0,M-1]$. 
   \end{observation}

   To show $\milg$ is negative in this sub-case, for all integers $m'\in[1:M-1]$, we prove this in two parts: first, for the range $0<m'<m_0=3.3798$, and second, for the range $m_0\leq m'<M-1$. 

   \textbf{Case 2(b)(I) $0<m'<m_0$:} To show $\milg(m_1,M)$ (we abuse the notation $\milg(\milq,m_1,M)$ slightly, as  $\milq$ is a constant) is negative (for any $M\geq 14$) in this case, first we define an auxiliary function denoted as $\milgprime$ and show that it is an upper bound on $\milg$. Then, we show that $\milgprime$ is negative in the range $0<m'<m_0$ by a careful analysis of its maximum value in this range, which completes the proof of Case 2(b)(I). We now proceed with these steps.

    \textbf{Step (i)} Consider the function $\milgprime(m',M)\triangleq \milq^{m'+1}+\frac{\ln(m'+1)}{\ln(M)}-1$. We now prove that $\milgprime(m',M)$ is an upper bound on $\milg(m',M)$ for all $m'>0$.  
    \begin{align}
        \milg(m',M)-\milgprime(m',M)&=\frac{\milq^{m'+1}-\milq^M}{1-\milq^{M-1}}-\milq^{m'+1}\nonumber\\
        &=\frac{\milq^{m'+1}-\milq^M-\milq^{m'+1}(1-\milq^{M-1})}{1-\milq^{M-1}}\nonumber\\
        &=\frac{\milq^M(\milq^{m'}-1)}{1-\milq^{M-1}}\label{eqn:MIL_aprox_g_g'}
    \end{align}

    Plugging in $\milq,m_0$, we see that the expression $\milq^{m'}-1<0$ for $m'\in(0,m_0)$. Hence from \eqref{eqn:MIL_aprox_g_g'} we have $\milg(m',M)<\milgprime(m',M)$. 

    \textbf{Step (ii)} Now, we show that $\milgprime$ is negative for $m'\in(0,m_0)$ for any $M\geq 14$. We first observe that as $\milgprime$ is clearly decreasing with $M$, it is sufficient to prove that $\milgprime(m',14)$ is negative as $m'\in(0,m_0)$. 
   
    To show that $\milgprime(m',14)$ is negative as $m'\in(0,m_0)$, we first show that the derivative of $\milgprime$ with respect to $m'$ is positive in this range, and thus it is sufficient to show $\milgprime(m_0,14)$ is negative. 
    
    The derivative of $\milgprime(m',14)$ with respect to $m'$ is as follows: 
    \begin{align}
        \frac{\partial \milgprime}{\partial m'} &= \milq^{m'+1}\ln(\milq)+\frac{1}{(m'+1)\ln(14)}.
        \label{eqn:MIL_partial_g'_w.r.t_m}
    \end{align}

Taking the second derivative, we get, 
\begin{align}
        \frac{\partial^2 \milgprime}{\partial (m')^2} &= \milq^{m'+1}(\ln(\milq))^2-\frac{1}{(m'+1)^2\ln(14)}\nonumber\\
        &= \frac{(\ln(\milq))^2}{(m'+1)^2}\left(\milq^{m'+1}(m'+1)^2-\frac{1}{\ln(\milq))^2\ln(14)}\right).
        \label{eqn:MIL_doublepartial_g'_w.r.t_m}
    \end{align}
The sign of \eqref{eqn:MIL_doublepartial_g'_w.r.t_m} depends on the sign of the term within the parentheses. Note that the term $\milk(m')\triangleq\milq^{m'+1}(m'+1)^2$ has derivative 
\begin{align*}
    \frac{\partial \milk(m')}{\partial m'}=\milq^{m'+1}(m'+1)\left(2+(m'+1)\ln(\milq)\right). 
\end{align*}
The minimum value of $2-(m'+1)\ln(1/\milq)$ in the range $m'\in(0,m_0)$ occurs at $m_0=\frac{1}{\ln(1/\milq)}-1$, at which point we have $2-(m_0+1)\ln(1/\milq)>2-(\frac{1}{\ln(1/\milq)})(\ln(1/\milq))=1>0$. Thus, $\milk(m')$ is increasing in $m'\in(0,m_0)$. So the largest value of $\milk(m')$ occurs at $m'=m_0$, which is evaluated as $\milk(m_0)=\milq^{m_0+1}(m_0+1)^2<6.2<(\ln(\milq))^2\ln(14).$ Therefore, from \eqref{eqn:MIL_doublepartial_g'_w.r.t_m}, we see that $\frac{\partial^2 \milgprime}{\partial (m')^2}$ is negative for $m'\in(0,m_0)$. Thus, $\frac{\partial \milgprime}{\partial m'}$ is strictly decreasing in $m'$ in this range. Thus, the minimum value of $\frac{\partial \milgprime}{\partial m'}$ occurs at $m'=m_0$, at which point we have, from \eqref{eqn:MIL_partial_g'_w.r.t_m}. 
$$\frac{\partial \milgprime}{\partial m'}\bigg|_{m'=m_0}= 0.0025>0$$. 
Thus, $\milgprime(m',14)$ is increasing in $m'\in(0,m_0)$. Thus, its largest value is attained at $m_0$, which we is calculated as follows.

\begin{align}
        \milgprime(m_0,14)&=\milq^{m_0+1}+\frac{\ln(m_0+1)}{\ln(14)}-1\nonumber\\
        &=0.79587^{4.3798}+\frac{\ln(4.3798)}{\ln(14)}-1\nonumber\\
        &=0.3679+0.5597-1\nonumber\\
        &=-0.0724<0.
    \end{align}
As, $\milgprime(m',M)<\milgprime(m',14)<\milgprime(m_0,14)$ for all $m'\in(0,m_0)$ and for all $M\geq 14$, we thus have $\milgprime(m',M)$ is negative in $m'\in(0,m_0)$, which by \eqref{eqn:MIL_aprox_g_g'} means that  $\milg(m',M)$ is negative in $m'\in(0,m_0)$. This concludes the present sub-case.

    \textbf{Case 2(b)(II) $m_0\leq m'<M-1$:} 
    By Observation \ref{obs:structureofgmilatm1m2_Cmil<fmil-sub-case}, we know $\milg$ attains a maximum value at $m'=m_1$. By Case 2(b)(I), we know that $\milg(m_1)<0$, as $m_1<m_0$. Further,  Observation \ref{obs:structureofgmilatm1m2_Cmil<fmil-sub-case} guarantees that $\milg$ is decreasing with $m'$ in $m'\in(m_1,m_2)$, attaining a minimum at $m'=m_2$. Thus, $\milg(m')<\milg(m_1)$ for any $m'\in(m_1,m_2]$. Note that $m_0\in(m_1,m_2)$, thus $\milg(m')<\milg(m_1)<0$ for $m'\in[m_0,m_2]$. 
    
    Now, in the interval $(m_2,M-1]$, $\milg$ is increasing as per Observation \ref{obs:structureofgmilatm1m2_Cmil<fmil-sub-case}. However, by its definition in \eqref{eqn:MIL_g_function}, we see that for each $M>2$ we have $\milg(M-1, M)=0$. Combining these two observations, we see that $\milg(m')<0$ for $m'\in(m_2, M-2]$. This concludes the proof that $\milg$ is negative for all $m'\in[1, M-2]$, for all $M>2$, and for all $q\leq q_0=0.79587$. Thus, the proof of Theorem \ref{thm:new_WBU_MIL_thm} is complete.

\section{Proof of Theorem \ref{thm:new_WBU_MaxL_thm}: Class-wise optimality of Theorem \ref{thm:WBU_MaxL} under $\mathsf{MaxL}$}
\label{sec:proofofThmWBUMaxL}

    
    

In this section, we provide the proof of Theorem \ref{thm:new_WBU_MaxL_thm}. As established in Observation \ref{obs:for_M_equels_2_MaxL}, when $M=2$ the distribution of $M'$ is uniquely determined by the $\mathsf{MaxL}$ leakage constraint $\rho$, and consequently the corresponding rate-privacy trade-off is fixed and class-wise optimal. Therefore, in the remainder of the analysis, we restrict attention to the case $M>2$. Since the maximal achievable rate $R_{WBU,\mathsf{MaxL}}^{*}$ is increasing monotonically in $\rho$, it suffices to prove the result for all $\maxleak=\rho$. From Theorem \ref{thm:WBU_MaxL}, the distribution defined by $\Prob_{M'}(0)=\frac{N(2^{\rho}-1)}{K(M-1)}=1-\Prob_{M'}(M-1)$ and $\Prob_{M'}(m')=0$ for all $m'\in[1:M-2]$, achieves the rate $R_{WBU,\mathsf{MaxL}}(\Prob_{M'})=R_{WBU,\mathsf{MaxL}}^*$, while also attaining $\maxleak(\Prob_{M'})=\rho$. However, it remains to show that this is the unique distribution that attains the maximum rate under the same constraint $\maxleak=\rho$. This will complete the proof of the theorem.

We start the proof of this theorem along the same lines of Theorem \ref{thm:new_WBU_MIL_thm}, assuming that there exists some valid distribution $\mathbf{p}=(p(0),p(1),\hdots,p(M-1))$ for $\Prob_{M'}$, for which the value of $\maxleak$ as given in \eqref{eqn:WBU_MaxL_expression} is equal to $\rho$.
    \begin{align*}
        2^\rho=p(0)\left(\frac{K(M-1)+N}{N}\right)+ \sum_{m'=1}^{M-1}p(m')\left(\frac{M}{m'+1}\right).
    \end{align*}
    Fixing $b_0=\frac{K(M-1)+N}{N}$ and $b_{m'}=\frac{M}{m'+1}$ for $m'\in[1:M-1]$, we can write 
    \begin{align}
        p(0)b_0&=2^{\rho}-\sum_{m'=1}^{M-1}p(m')b_{m'}\nonumber\\
        &=2^{\rho}-1+1-\sum_{m'=1}^{M-1}p(m')b_{m'}\nonumber\\
        &=2^{\rho}-1+\sum_{m'=0}^{M-1}p(m')-\sum_{m'=1}^{M-1}p(m')b_{m'}\nonumber\\
        &=2^{\rho}-1+p(0)+\sum_{m'=1}^{M-1}p(m')(1-b_{m'})\label{eqn:MaxL_p0_distribution1}.
    \end{align}
    Subtract $p(0)$ from both sides of \eqref{eqn:MaxL_p0_distribution1} and divide both side by $b_0-1$ gives the following result: 
    \begin{equation}
        p(0)= \frac{2^\rho-1}{b_0-1}-\sum_{m'=1}^{M-2}p(m')\frac{b_{m'}-1}{b_0-1}.\label{eqn:p(0)_value_MaxL}
    \end{equation}
    Further, following ideas in Theorem \ref{thm:new_WBU_MIL_thm}, we define the term $$T(\mathbf{p})\triangleq\mathbb{E}_{M'}\left[\left(\frac{K}{N}\right)^{M'+1}\right]=\sum_{m'=0}^{M-1}p(m')c_{m'},$$ where $c_{m'}=\left(\frac{K}{N}\right)^{m'+1}$. We denote $T^*\triangleq c_{M-1} + \frac{2^{\rho}-1}{b_0-1}(c_0 - c_{M-1})=\left(\frac{K}{N}\right)^{M}+\frac{N(2^{\rho}-1)}{K(M-1)}\left(\frac{K}{N}-\left(\frac{K}{N}\right)^{M}\right)$. 
    Then, from $R^*_{WSJ,\mathsf{MaxL}}$ in Theorem \ref{thm:new_WBU_MaxL_thm}, we get the following. 
    \begin{align*}
        R^{*}_{WBU,\mathsf{MaxL}}
        &= \left(1+\left(1-\frac{N(2^{\rho}-1)}{K(M-1)}\right) \frac{\frac{K}{N}-\left(\frac{K}{N}\right)^{M}}{1-\frac{K}{N}}\right)^{-1} \nonumber \\
        &= \left(\frac{(1-\frac{K}{N}) + \left(1-\frac{N(2^{\rho}-1)}{K(M-1)}\right)\left(\frac{K}{N}-\left(\frac{K}{N}\right)^{M}\right)}{1-\frac{K}{N}}\right)^{-1} \nonumber \\
        &= \left(\frac{1 - \frac{K}{N} + \frac{K}{N} - \left(\frac{K}{N}\right)^{M} - \frac{N(2^{\rho}-1)}{K(M-1)}(\frac{K}{N}-\left(\frac{K}{N}\right)^{M})}{1-\frac{K}{N}}\right)^{-1} \nonumber \\
        &= \left(\frac{1 - \left[\left(\frac{K}{N}\right)^{M} + \frac{N(2^{\rho}-1)}{K(M-1)}\left(\frac{K}{N}-\left(\frac{K}{N}\right)^{M}\right)\right]}{1-\frac{K}{N}}\right)^{-1} \nonumber \\
        &= \left(\frac{1-T^*}{1-K/N}\right)^{-1} = \frac{1-K/N}{1-T^*}.
    \end{align*}
    
   Thus, the rate $R^{*}_{WBU,\mathsf{MaxL}}$ is achieved by fixing $T=T^*$. Thus, to prove Theorem \ref{thm:new_WBU_MaxL_thm}, it is sufficient is to show that, under the assumption that $K/N\leq 0.60199$, the only choice of the distribution $\bold{p}$ of $M'$ which gives $T(\bold{p})\geq T^*$ (and thus $R_{WBU,\mathsf{MaxL}}\geq R^{*}_{WBU,\mathsf{MaxL}}$) under $\maxleak= \rho$ is the distribution $p(0)=(N(2^\rho-1)/(K(M-1)=1-p(M-1).$ Again, we continue with arguments similar to the proof of Theorem \ref{thm:new_WBU_MIL_thm}.
   
   Now, using $p(0)$ from \eqref{eqn:p(0)_value_MaxL}, and using the fact that $p({M-1})=1-\sum_{m'=0}^{M-2}p(m')$, we get the following.
    \begin{align} 
        T(\mathbf{p}) &=\sum_{m'=0}^{M-2} p(m') c_{m'} +\left(1-\sum_{m=0}^{M-2}p(m')\right)c_{M-1}\nonumber\\
        &=c_{M-1}+p(0)\left(c_{0}-c_{M-1}\right) + \sum_{m'=1}^{M-2}p(m')\left(c_{m'}-c_{M-1}\right)\nonumber\\
        &=c_{M-1}+\left(\frac{2^\rho-1}{b_0-1}-\sum_{m'=1}^{M-2}p(m')\frac{b_{m'}-1}{b_0-1}\right)\left(c_{0}-c_{M-1}\right)\nonumber\\
        &~~~~~~+\sum_{m'=1}^{M-2}p(m')\left(c_{m'}-c_{M-1}\right)\nonumber\\
        &=\left[ c_{M-1} + \frac{2^\rho-1}{b_0-1}(c_0 - c_{M-1}) \right]\nonumber\\
        &~~~~+ \sum_{m'=1}^{M-2} p(m') 
        \left[ (c_{m'} - c_{M-1}) - \frac{b_{m'}-1}{b_0-1}(c_0 - c_{M-1}) \right]\nonumber
        \\
        &= T^* + \sum_{m'=1}^{M-2} p(m') \underbrace{\left[ (c_{m'} - c_{M-1}) - \frac{b_{m'}-1}{b_0-1}(c_0 - c_{M-1}) \right]}_{D_{m'}} \label{eq:T_expanded_MaxL},
    \end{align}
    where the last equality follows by definition of $T^*$. 

    Now, suppose we show that the coefficients $D_{m'}$ are strictly negative for $m'\in[1:M-2]$. Establishing this negativity implies that the objective function $T(\mathbf{p})$ is maximized strictly when $p(m')=0$ for all $m'\in[0:M-1]\setminus\{0,M-1\}$, thereby confining the support of the optimal $\Prob_{M'}$ to the boundary set $\{0,M-1\}$. Subject to this binary support, the constraint $\maxleak=\rho$ uniquely determines the probability distribution as $\Prob_{M'}(0)=1-\Prob_{M'}(M-1)=\frac{N(2^{\rho}-1)}{K(M-1)}$. Thus, this would result in the proof of Theorem \ref{thm:new_WBU_MaxL_thm}. 
    
    Hence, the rest of the proof focusses on showing $D_{m'}<0$, for $m'\in[1:M-2]$, when $K/N\leq 0.60199$. We consider the following equations.
    \begin{align}
        D_{m'} &= \left[ (c_{m'} - c_{M-1}) - \frac{b_{m'}-1}{b_0-1}(c_0 - c_{M-1}) \right] \notag \\
        &= \left(\left(\frac{K}{N}\right)^{m'+1}- \left(\frac{K}{N}\right)^M\right)-\left(
        \frac{\frac{M}{m'+1}-1}{\frac{K(M-1)+N}{N}-1}\right)\left(\frac{K}{N}-\left(\frac{K}{N}\right)^M\right)\nonumber\\
        &= \left(\left(\frac{K}{N}\right)^{m'+1}- \left(\frac{K}{N}\right)^M\right)-
        \frac{N}{K}\left(\frac{M-m'-1}{(m'+1)(M-1)}\right)\left(\frac{K}{N}-\left(\frac{K}{N}\right)^M\right)\nonumber\\
        &= \left(\left(\frac{K}{N}\right)^{m'+1}- \left(\frac{K}{N}\right)^M\right)-
        \left(1-\frac{m'}{m'+1}\frac{M}{M-1}\right)\left(1-\left(\frac{K}{N}\right)^{M-1}\right)\nonumber\\
        &= \left(1-\left(\frac{K}{N}\right)^{M-1}\right)\left(\frac{\left(\frac{K}{N}\right)^{m'+1}- \left(\frac{K}{N}\right)^M}{1-\left(\frac{K}{N}\right)^{M-1}}-\left(1-\frac{\frac{m'}{m'+1}}{\frac{M-1}{M}}\right)\right)\label{eqn:D_mprime_MaxL_expression}
    \end{align}

    Thus, from \eqref{eqn:D_mprime_MaxL_expression}, to show $D_{m'}<0$ for every $m'\in[1:M-2]$ and for $K/N\leq\maxlq$ where $\milq=0.60199$, it is sufficient to prove that 
    \begin{equation}
        \maxlg(q,m',M)\triangleq \frac{q^{m'+1}-q^{M}}{1-q^{M-1}}+\frac{m'}{m'+1}\left(\frac{M}{M-1}\right)-1\label{eqn:MaxL_g_definition}
    \end{equation}
    is strictly negative for $q\le \maxlq$, $M>2$, and $m'\in[1:M-2]$. 
    
    We begin by establishing the monotonicity of $\maxlg$ with respect to $q$. For fixed $m'$ and $M$, define
    \begin{equation}
        \maxlh(q)\triangleq q^{m'+1}\left(\frac{1-q^{M-m'-1}}{1-q^{M-1}}\right).\nonumber
    \end{equation}
    Then $\maxlg(q,m',M)=\maxlh(q)+\frac{m'}{m'+1}\frac{M}{M-1}-1$, and except $\maxlh(q)$ the remaining terms in $\maxlg$ are independent of $q$, so the monotonicity of $\maxlg$ in $q$ is equivalent to the monotonicity of $\maxlh$ in $q$. Since $\maxlh(q)$ is identical to the function $\milh(q)$ defined in \eqref{eqn:hMIL_auxfunction_defn} in Section~\ref{sec:proofofThmWBUMIL}, the argument leading to \eqref{eqn:g_increases_with_q} applies verbatim here and gives
    \[
    \frac{\partial \maxlh(q)}{\partial q}>0,\qquad q\in(0,1).
    \]
    Therefore $\maxlh$, and hence $\maxlg$, is strictly increasing in $q$. Consequently, $q_1$ and $q_2$ such that $0<q_1<q_2<1$ we have 
    \begin{equation}
        \maxlg(q_1,m',M)<\maxlg(q_2,m',M)~~\forall~ m'>-1~\text{ and }~\forall~M\ge3.\label{eqn:MaxL_g_monotonicity_on_q}
    \end{equation}
    Since $\maxlg$ is increasing in $q$ for $q\in(0,1)$, any $M>2$ and $m'\in[1:M-2]$, to complete the proof of Theorem~\ref{thm:new_WBU_MaxL_thm}, it is sufficient to show that $\maxlg(\maxlq, m', M)<0$, $\forall M>2$ and  $m'\in[1:M-2]$. The behavior of $\maxlg(\maxlq, m', M)$ is shown in Figure~\ref{fig:maxl_g} for some $M$ and for $m'\in[0,M-1]$.

    Now, we write the proof in two parts: first for the range $3\le M\le 15$, and then for the remaining case $M>15$. The arguments are very similar to those in the proof of Theorem \ref{thm:new_WBU_MIL_thm}, repeating in structure almost as is, except for changes in the actual numerical values. For the sake of completeness, we provide the full proof with all details. 
    
    \textbf{Case 1:} 
    $M\in[3: 15]$: Using \eqref{eqn:MaxL_g_monotonicity_on_q} it is sufficient to check that $\maxlg(\maxlq,m',M)$ is negative for all valid $(m',M)$ with $M\in[3:15]$, as shown in Table~\ref{tab:MaxL_M_less_than_16}.

    \begin{table}[h!]
    \centering
    \scriptsize
    \renewcommand{\arraystretch}{1.18}
    \caption{Values of $\maxlg(\maxlq,\, m',\, M)$ (here $\maxlq = 0.601994$) for all valid $m' \in [1:M-2]$, and  $M \in [3:15]$.
    All entries are negative, confirming $D_{m'} < 0$ throughout.}
    \label{tab:MaxL_M_less_than_16}
    \setlength{\tabcolsep}{5pt}
    \begin{tabular}{cc}

    \begin{tabular}{ccS[table-format=1.5]S[table-format=1.5]S[table-format=-1.5]}
    \toprule
    $M$ & $m'$ & {$\frac{\maxlq^{m'+1}-\maxlq^M}{1-\maxlq^{M-1}}$} & {$1-\frac{m'}{m'+1}\frac{M}{M-1}$} & {$\maxlg$} \\
    \midrule
    3 & 1 & 0.22622 & 0.25000 & -0.02378 \\
    \midrule
    4 & 1 & 0.29554 & 0.33333 & -0.03779 \\
    4 & 2 & 0.11106 & 0.11111 & -0.00005 \\
    \midrule
    5 & 1 & 0.32617 & 0.37500 & -0.04883 \\
    5 & 2 & 0.16013 & 0.16667 & -0.00654 \\
    5 & 3 & 0.06017 & 0.06250 & -0.00233 \\
    \midrule
    6 & 1 & 0.34183 & 0.40000 & -0.05817 \\
    6 & 2 & 0.18521 & 0.20000 & -0.01479 \\
    6 & 3 & 0.09093 & 0.10000 & -0.00907 \\
    6 & 4 & 0.03417 & 0.04000 & -0.00583 \\
    \midrule
    7 & 1 & 0.35042 & 0.41667 & -0.06624 \\
    7 & 2 & 0.19898 & 0.22222 & -0.02324 \\
    7 & 3 & 0.10781 & 0.12500 & -0.01719 \\
    7 & 4 & 0.05293 & 0.06667 & -0.01374 \\
    7 & 5 & 0.01989 & 0.02778 & -0.00789 \\
    \midrule
    8 & 1 & 0.35533 & 0.42857 & -0.07324 \\
    8 & 2 & 0.20684 & 0.23810 & -0.03126 \\
    8 & 3 & 0.11745 & 0.14286 & -0.02541 \\
    8 & 4 & 0.06364 & 0.08571 & -0.02208 \\
    8 & 5 & 0.03124 & 0.04762 & -0.01638 \\
    8 & 6 & 0.01174 & 0.02041 & -0.00867 \\
    \midrule
    9 & 1 & 0.35819 & 0.43750 & -0.07931 \\
    9 & 2 & 0.21142 & 0.25000 & -0.03858 \\
    9 & 3 & 0.12307 & 0.15625 & -0.03318 \\
    9 & 4 & 0.06988 & 0.10000 & -0.03012 \\
    9 & 5 & 0.03786 & 0.06250 & -0.02464 \\
    9 & 6 & 0.01859 & 0.03571 & -0.01713 \\
    9 & 7 & 0.00699 & 0.01562 & -0.00864 \\
    \midrule
    10 & 1 & 0.35988 & 0.44444 & -0.08456 \\
    10 & 2 & 0.21413 & 0.25926 & -0.04513 \\
    10 & 3 & 0.12639 & 0.16667 & -0.04027 \\
    10 & 4 & 0.07357 & 0.11111 & -0.03754 \\
    10 & 5 & 0.04178 & 0.07407 & -0.03230 \\
    10 & 6 & 0.02264 & 0.04762 & -0.02498 \\
    10 & 7 & 0.01111 & 0.02778 & -0.01667 \\
    10 & 8 & 0.00418 & 0.01235 & -0.00817 \\
    \midrule

    11 & 1 & 0.36089 & 0.45000 & -0.08911 \\
    11 & 2 & 0.21575 & 0.26667 & -0.05092 \\
    11 & 3 & 0.12837 & 0.17500 & -0.04663 \\
    11 & 4 & 0.07577 & 0.12000 & -0.04423 \\
    11 & 5 & 0.04411 & 0.08333 & -0.03923 \\
    11 & 6 & 0.02505 & 0.05714 & -0.03210 \\
    11 & 7 & 0.01357 & 0.03750 & -0.02393 \\
    11 & 8 & 0.00666 & 0.02222 & -0.01556 \\
    11 & 9 & 0.00250 & 0.01000 & -0.00750 \\
    \bottomrule
    \end{tabular}
    &

    \begin{tabular}{ccS[table-format=1.5]S[table-format=1.5]S[table-format=-1.5]}
    \toprule
    $M$ & $m'$ & {$\frac{\maxlq^{m'+1}-\maxlq^M}{1-\maxlq^{M-1}}$} & {$1-\frac{m'}{m'+1}\frac{M}{M-1}$} & {$\maxlg$} \\
    \midrule

    12 & 1 & 0.36149 & 0.45455 & -0.09305 \\
    12 & 2 & 0.21671 & 0.27273 & -0.05602 \\
    12 & 3 & 0.12955 & 0.18182 & -0.05226 \\
    12 & 4 & 0.07709 & 0.12727 & -0.05019 \\
    12 & 5 & 0.04550 & 0.09091 & -0.04541 \\
    12 & 6 & 0.02649 & 0.06494 & -0.03845 \\
    12 & 7 & 0.01504 & 0.04545 & -0.03042 \\
    12 & 8 & 0.00815 & 0.03030 & -0.02215 \\
    12 & 9 & 0.00400 & 0.01818 & -0.01418 \\
    12 & 10 & 0.00150 & 0.00826 & -0.00676 \\
    \midrule
    13 & 1 & 0.36185 & 0.45833 & -0.09648 \\
    13 & 2 & 0.21729 & 0.27778 & -0.06049 \\
    13 & 3 & 0.13026 & 0.18750 & -0.05724 \\
    13 & 4 & 0.07787 & 0.13333 & -0.05546 \\
    13 & 5 & 0.04634 & 0.09722 & -0.05089 \\
    13 & 6 & 0.02735 & 0.07143 & -0.04408 \\
    13 & 7 & 0.01592 & 0.05208 & -0.03616 \\
    13 & 8 & 0.00904 & 0.03704 & -0.02800 \\
    13 & 9 & 0.00490 & 0.02500 & -0.02010 \\
    13 & 10 & 0.00240 & 0.01515 & -0.01275 \\
    13 & 11 & 0.00090 & 0.00694 & -0.00604 \\
    \midrule
    14 & 1 & 0.36207 & 0.46154 & -0.09947 \\
    14 & 2 & 0.21764 & 0.28205 & -0.06441 \\
    14 & 3 & 0.13069 & 0.19231 & -0.06162 \\
    14 & 4 & 0.07835 & 0.13846 & -0.06011 \\
    14 & 5 & 0.04684 & 0.10256 & -0.05573 \\
    14 & 6 & 0.02787 & 0.07692 & -0.04905 \\
    14 & 7 & 0.01645 & 0.05769 & -0.04124 \\
    14 & 8 & 0.00958 & 0.04274 & -0.03316 \\
    14 & 9 & 0.00544 & 0.03077 & -0.02533 \\
    14 & 10 & 0.00295 & 0.02098 & -0.01803 \\
    14 & 11 & 0.00145 & 0.01282 & -0.01137 \\
    14 & 12 & 0.00054 & 0.00592 & -0.00537 \\
    \midrule
    15 & 1 & 0.36220 & 0.46429 & -0.10209 \\
    15 & 2 & 0.21785 & 0.28571 & -0.06787 \\
    15 & 3 & 0.13094 & 0.19643 & -0.06548 \\
    15 & 4 & 0.07863 & 0.14286 & -0.06423 \\
    15 & 5 & 0.04714 & 0.10714 & -0.06000 \\
    15 & 6 & 0.02818 & 0.08163 & -0.05345 \\
    15 & 7 & 0.01677 & 0.06250 & -0.04573 \\
    15 & 8 & 0.00990 & 0.04762 & -0.03772 \\
    15 & 9 & 0.00576 & 0.03571 & -0.02995 \\
    15 & 10 & 0.00327 & 0.02597 & -0.02270 \\
    15 & 11 & 0.00177 & 0.01786 & -0.01608 \\
    15 & 12 & 0.00087 & 0.01099 & -0.01012 \\
    15 & 13 & 0.00033 & 0.00510 & -0.00478 \\
    \bottomrule
    \end{tabular}
    \end{tabular}
    \end{table}

    \textbf{Case 2:} $M>15$: We now prove for the remaining case $M>15$. 
    First, we analyze the behavior of $\maxlg$ with respect to $m'$, assuming that $\milg$ is a continuous function of the real variable $m'$ on the interval $(-1, M-1]$. 
    The partial derivative of $\maxlg$ with respect to $m'$ is
    \begin{align}
        \frac{\partial \maxlg(\maxlq, m', M)}{\partial m'}
        &= \frac{\maxlq^{m'+1}\ln(\maxlq)}{1-\maxlq^{M-1}}+\frac{1}{(m'+1)^2}\frac{M}{M-1}\nonumber\\
        &= -\frac{\maxlq^{m'+1}\ln(1/\maxlq)}{1-\maxlq^{M-1}}+\frac{1}{(m'+1)^2}\frac{M}{M-1}\nonumber\\
        &= \frac{\ln(1/\maxlq)}{(1-\maxlq^{M-1})(m'+1)^2}
        \left(\underbrace{\frac{M}{M-1}\frac{1-\maxlq^{M-1}}{\ln(1/\maxlq)}-(m'+1)^2\maxlq^{m'+1}}_{T_3}\right).\label{eqn:MaxL_partial_deriv_in_m}
    \end{align}
    
    Note that the sign of \eqref{eqn:MaxL_partial_deriv_in_m} depends on the $T_3$ term. Now, we define two functions $\maxlC(q, M)\triangleq \frac{M}{M-1}\frac{1-q^{M-1}}{\ln(1/q)}$ and $\maxlf(m')\triangleq (m'+1)^2q^{m'+1}$.
    \begin{observation}
        \label{obs:maxlC_and_maxlf_properties}
        The following observations are true:
        \begin{enumerate}[label=\alph*)]
            \item $\maxlC(q_0, M)>0$ for all $M>15$, and further $\maxlC(q_0, M)$ is constant with respect to $m'$.
            \item By definition, the function $\maxlf(m')$ is positive for $m'>-1$. Further, we have $\maxlf(-1^+)=0$, and also $\lim_{m'\rightarrow\infty}\maxlf(m')=0$. Since $\frac{\partial \maxlf(m')}{\partial m'}=\maxlq^{m'+1}(m'+1)(2+(m'+1)\ln(\maxlq))$, the function $\maxlf$ has a unique maximum value in the range $m'\in[-1,\infty)$ only at $m_0=\frac{2}{\ln(1/\maxlq)}-1=2.9408$, which is $\maxlf(m_0)=2.102$. Thus, $\maxlf$ is increasing in the regime $m'\in[-1,m_0]$ and decreasing in $m'\in[m_0,\infty)$.
        \end{enumerate}
    \end{observation}
    
    
    We now analyze two sub-cases based on the relationship between $\maxlC(\maxlq, M)$ and $\maxlf(m_0)$.

    \textbf{Case 2(a): $\maxlC(\maxlq,M)\ge\maxlf(m_0):$} If $\maxlC(\maxlq,M)\ge\maxlf(m_0)>\maxlf(m')$ (by Observation \ref{obs:maxlC_and_maxlf_properties}, this last inequality holds), then from \eqref{eqn:MaxL_partial_deriv_in_m}, we see that $\frac{\partial \maxlg}{\partial m'}\geq 0$ for $m'=m_0$ and $\frac{\partial \maxlg}{\partial m'}>0$ for all $m'\in(-1,M-1]\backslash\{m_0\}$. Hence, $\maxlg$ is non-decreasing in $m'$, and further we can obtain $\lim_{m'\rightarrow-1^+}\maxlg(\maxlq, m', M)=-\infty$ and $\maxlg(\maxlq,M-1,M)=0$. Combining all the above properties, it follows that $\maxlg$ must be negative for all $m'\in[1:M-2]$, completes the proof in this sub-case.

    \textbf{Case 2(b): $\maxlC(\maxlq, M)< \maxlf(m_0):$} First, we make the following observations about the behavior of $\maxlC, \maxlf$ and $\maxlg$ with respect to $m'$ in this sub-case.
    \begin{observation}
        \label{obs:structureofgmaxlatm1m2_Cmaxl<fmaxl-sub-case}
        If $\maxlC(\maxlq, M)<\maxlf(m_0)$, as depicted in Figure~\ref{fig:maxl_c_f}, then by the structure of $\maxlf$ as in Observation~\ref{obs:maxlC_and_maxlf_properties} the equation $\maxlf(m')=\maxlC(\maxlq, M)$ (as an equation in $m'$) has exactly two roots, say $m_1$ and $m_2$, which satisfy $m_1<m_0<m_2$. Clearly, $m_1<m_0=2.9408$, and $M>15$, therefore we have that $m_1<M-1$. Further, as $\maxlf$ has a unique global maximum at $m_0$, we see that $\maxlC-\maxlf$ (and thus $\frac{\partial \maxlg}{\partial m'}$, by \eqref{eqn:MaxL_partial_deriv_in_m}) is positive in $m'\in[-1,m_1)\cup(m_2,\infty)$ and negative for $m'\in(m_1,m_2)$, in this sub-case. Further, $\maxlg$ has a local maxima at $m'=m_1$, where $m_1\in[-1,m_0)$, and a local minima at $m'=m_2$, where $m_2\in(m_0,M-1]$. 
    \end{observation}

       \begin{figure}[htbp]
        \centering
        
        \subfloat[\label{fig:maxl_c_f}]{%
            \includegraphics[width=0.49\linewidth]{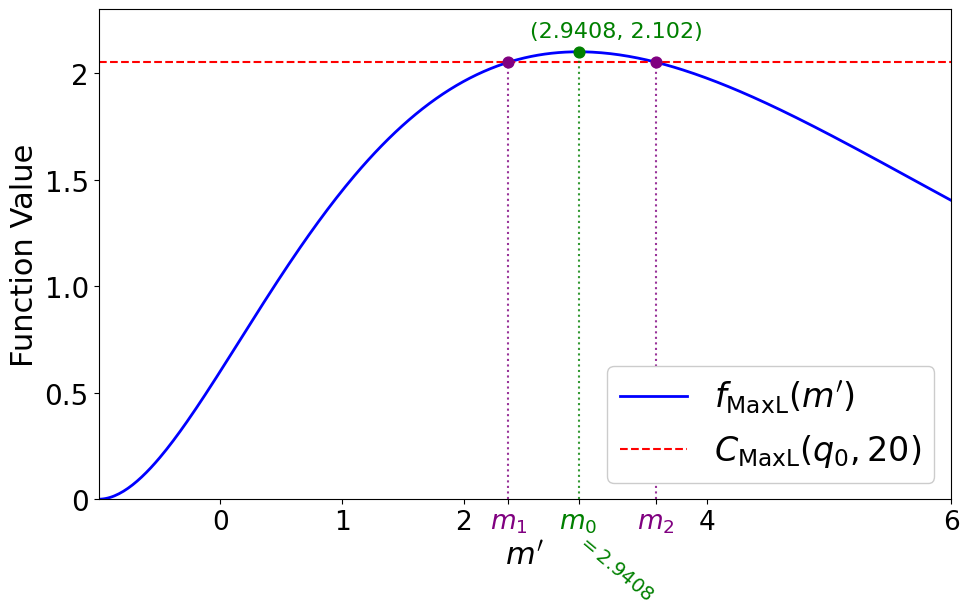}%
        }
        \hfill 
        \subfloat[\label{fig:maxl_g}]{%
            \includegraphics[width=0.49\linewidth]{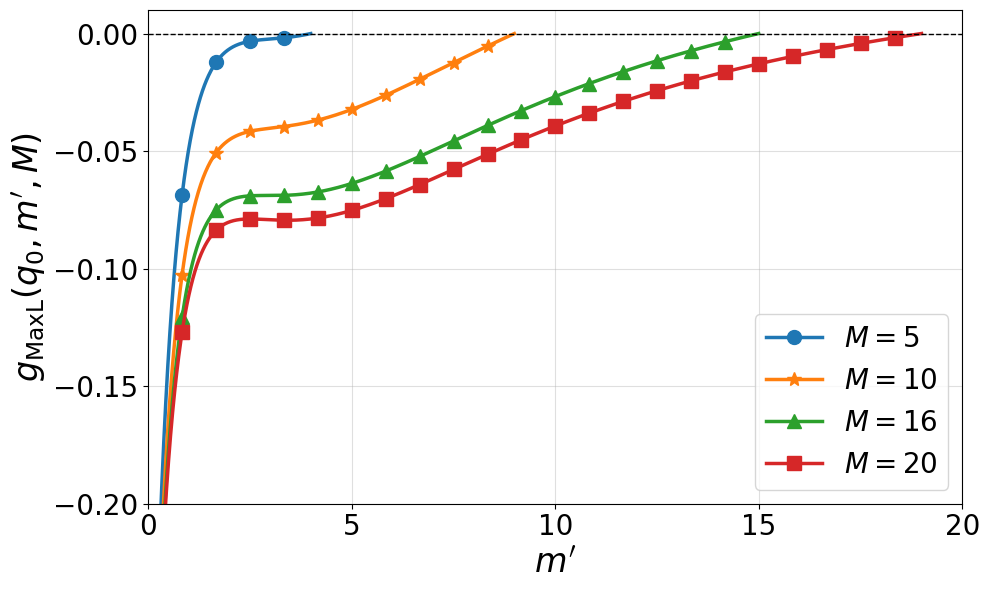}%
        }
        
        \caption{Figure~\ref{fig:maxl_c_f} illustrates the behavior of $\maxlC$ and $\maxlf$ with respect to $m'\ge0$, computed at $\maxlq=0.601994$ with $M=20$. 
        The behavior of $\maxlg(\maxlq,m',M)$ over $m'\in[0,M-1]$ is plotted in Figure~\ref{fig:maxl_g} for $\maxlq=0.601994$ and $M\in\{5,10,15,20\}$.
        }
        \label{fig:maxl_combined}
    \end{figure}


    To show $\maxlg$ is negative in this sub-case, for all $m'\in[-1:M-1]$, we prove this in two parts: first, for the range $m'\in(0,m_0)$, and second, for the range $m'\in(m_0,M-1]$. 

    \textbf{Case 2(b)(I) $0<m'<m_0$:} To show $\maxlg(m',M)$ (we drop $\maxlq$ from the notation for brevity) is negative for any $M>15$ in this sub-case in two steps. First we define a function $\maxlgprime$ and show that it is an upper bound on $\maxlg$. Then, we show that $\maxlgprime$ is negative in the range $0<m'<m_0$ by a careful analysis of its maximum value in this range, which completes the proof of \textbf{Case 2(b)(I)}. We now proceed with these steps.

    \textbf{Step (i):} Consider the function defined as $\maxlgprime(m', M)\triangleq\maxlq^{m'+1}+\frac{m'}{m'+1}\left(\frac{M}{M-1}\right)-1$. 
    We now prove that $\maxlgprime(m',M)$ is an upper-bound on function $\maxlg(m',M)$ for all $m'\in(0,m_0)$ and $M>15$. 
    \begin{align}
        \maxlg(\maxlq,m',M)-\maxlgprime(\maxlq,m',M)
        &= \frac{\maxlq^{m'+1}-\maxlq^{M}}{1-\maxlq^{M-1}}-\maxlq^{m'+1}\nonumber\\
        &= \frac{\maxlq^{m'+1}-\maxlq^{M}-\maxlq^{m'+1}(1-\maxlq^{M-1})}{1-\maxlq^{M-1}}\nonumber\\
        &= \frac{\maxlq^{M}(\maxlq^{m'}-1)}{1-\maxlq^{M-1}}.\label{eqn:MaxL_aprox_g_gprime}
    \end{align}

    It is easy to see that $\maxlq^{m'}-1<0$ for $m'\in(0,m_0)$. Hence, from \eqref{eqn:MaxL_aprox_g_gprime}, we have $\maxlg(\maxlq,m',M)<\maxlgprime(\maxlq,m',M)$. 
    
    \textbf{Step (ii):} Now, we show that $\maxlgprime$ is negative for $m'\in(0,m_0)$ for any $M>15$. We first observe that $\maxlgprime$ is decreasing in $M$, since $\frac{\partial \maxlgprime}{\partial M}=-\frac{m'}{(m'+1)(M-1)^2}<0$. Therefore, it is sufficient to prove that $\maxlgprime(m',16)$ is negative for $m'\in(0,m_0)$. 
    
    To show that $\maxlgprime(m',16)$ is negative for $m'\in(0,m_0)$, we first show that the derivative of $\maxlgprime$ with respect to $m'$ is non-negative in $m'\in(0,m_0)$, and thus it is sufficient to show $\maxlgprime(m_0,16)$ is negative.
    
    The derivative of $\maxlgprime(m',16)$ with respect to $m'$ is as follows:
    \begin{align}
        \frac{\partial \maxlgprime(m',16)}{\partial m'}
        &= \maxlq^{m'+1}\ln(\maxlq)+\frac{1}{(m'+1)^2}\frac{16}{15}\label{eqn:MaxL_partial_g'_w.r.t_m}
    \end{align}

    The second derivative of $\maxlgprime$ with respect to $m'$ is
    \begin{align}
        \frac{\partial^2 \maxlgprime(m',16)}{\partial m'^2} &= \maxlq^{m'+1}(\ln(\maxlq))^2-\frac{2}{(m'+1)^3}\frac{16}{15}\nonumber\\
        &=\frac{(\ln(\maxlq))^2}{(m'+1)^3}\left((m'+1)^3\maxlq^{m'+1}-\frac{32}{15(\ln(\maxlq))^2}\right).\label{eqn:MaxL_second_derivative_g'_w.r.t_m}
    \end{align}

    The sign of \eqref{eqn:MaxL_second_derivative_g'_w.r.t_m} is determined by the term in parentheses. Now, the term $\maxlk(m')\triangleq\maxlq^{m'+1}(m'+1)^3$ has a  derivative with respect to $m'$:
    \begin{equation*}
        \frac{\partial \maxlk(m')}{\partial m'}=\maxlq^{m'+1}(m'+1)^2(3+(m'+1)\ln(\maxlq)), 
    \end{equation*}
    which is decreasing in $m'$. 
    At $m'=m_0=\frac{2}{\ln(1/\maxlq)}-1$, we have the minimum value of $3-(m'+1)\ln(1/\maxlq)$, which is $3-(m_0+1)\ln(1/\maxlq)=3-\frac{2}{\ln(1/\maxlq)}\ln(1/\maxlq)=3-2=1>0$. Thus, $\frac{\partial \maxlk(m')}{\partial m'}>0$ for $m'\in(0,m_0)$, and hence $\maxlk(m')$ is increasing in $m'\in(0,m_0)$. It is straightforward to verify that $\maxlk(m_0)=\frac{32}{(4\exp(2)\ln(1/\maxlq))(\ln(1/\maxlq))^2}<\frac{32}{15(\ln(\maxlq))^2}$ (because $4\exp(2)\ln(1\maxlq)>15$). Therefore, from \eqref{eqn:MaxL_second_derivative_g'_w.r.t_m}, we see that $\frac{\partial^2 \maxlgprime(m',16)}{\partial m'^2}$ is negative from $m'\in(0,m_0)$. Thus, $\frac{\partial \maxlgprime(m',16)}{\partial m'}$ is decreasing in $m'\in(0,m_0)$. Thus, the minimum value of $\frac{\partial\maxlgprime}{\partial m'}$ in $m'\in(0,m_0)$ occurs at $m'=m_0$, which is $\frac{\partial \maxlgprime(m',16)}{\partial m'}\biggl|_{m'=m_0}=6.6092\times10^{-8}>0$.
    
    Thus, $\maxlgprime(m',16)$ is increasing in $m'\in(0,m_0)$. Its largest value is attained at $m_0$, which we calculate as follows:
    \begin{align}
        \maxlgprime(m_0,16) &= \maxlq^{m_0+1}+\frac{m_0}{m_0+1}\cdot\frac{16}{15}-1\nonumber\\
        &=0.601994^{2.9408+1}+\frac{2.9408}{2.9408+1}\cdot\frac{16}{15}-1\nonumber\\
        &=0.1353+0.7960-1\nonumber\\
        &=-0.0687<0.\label{eqn:MaxL_gprime_at_m0}
    \end{align}
    As, $\maxlgprime(m',M)<\maxlgprime(m_0,M)<\maxlgprime(m_0,16)$ for all $m'\in(0,m_0)$ and for all $M>15$, we have $\maxlgprime(m',16)$ is negative for $m'\in(0,m_0)$ and by \eqref{eqn:MaxL_aprox_g_gprime}, which completes the proof of \textbf{Case 2(b)(I)}.

    \textbf{Case 2(b)(II) $m'\in(m_0,M-1)$:} By Observation~\ref{obs:structureofgmaxlatm1m2_Cmaxl<fmaxl-sub-case}, we know that $\maxlg$ attains maximum value at $m'=m_1$. By \textbf{Case~2(b)(I)} we know that $\maxlg(m_1)<0$, as $m_1<m_0$. Further, Observation~\ref{obs:structureofgmaxlatm1m2_Cmaxl<fmaxl-sub-case} tells us that $\maxlg$ is decreasing in $m'$ in $m'\in(m_1,m_2)$. Note that $m_0\in(m_1,m_2)$, thus $\maxlg(m')<\maxlg(m_1)<0$ for any $m'\in[m_0,m_2]$.
    
    Now, in the interval $(m_2,M-1]$, $\maxlg$ is increasing as mentioned in Observation~\ref{obs:structureofgmaxlatm1m2_Cmaxl<fmaxl-sub-case}. However, by the definition of $\maxlg$ as in \eqref{eqn:MaxL_g_definition}, we have $\maxlg(M-1,M)=0$. Combining these two observations, we confirms that $\maxlg(m')<0$ for $m'\in(m_2,M-1)$ as well. Thus, $\maxlg(m')<0$ for all $m'\in(m_0,M-1)$, which completes the proof of \textbf{Case 2(b)(II)}.
    
    Combining all the above cases, we conclude that $\maxlg(q,m',M)<0$ for all $m'\in(0,M-1)$, for any $M>2$, and for all $q$ ranging $0\le q\le\maxlq=0.601994$. Hence, the proof of Theorem~\ref{thm:new_WBU_MaxL_thm} is complete. \hfill \qedsymbol
    
\begin{remark}
    It appears that Theorem~\ref{thm:new_WBU_MaxL_thm} holds for threshold $K/N\le0.602118$ but proving this analytically seems quite difficult.
\end{remark}

\begin{remark}
\label{rem:WSJ_optimality_comparison}
It is worth noting that, in the replicated, collusion-free setting, the $\wsj$ scheme corresponds to the $\wbu$ scheme with $K=1$. Consequently, the optimality of the $\wsj$ rate-privacy trade-off under the $\mathsf{MaxL}$ metric follows directly from Theorem~\ref{thm:new_WBU_MaxL_thm}, as the threshold code-rate condition, $K/N\le0.60199$ is strictly satisfied for all $N\ge2$ (since $1/N\le0.5<0.60199$). This result formally establishes the optimality of the trade-off in  \cite[Theorem 2]{WSJPIR2024} and confirms that the scheme matches the state-of-the-art trade-off reported in \cite{IWPIR2022}.
\end{remark}
    
\section{Conclusion}
\label{sec:conclusion}
This work investigated the capacity of Sun-Jafar-type Weak PIR schemes. We established the optimality of the schemes proposed in \cite{WSJPIR2024} for the replicated, non-colluding scenario under both $\mathsf{MIL}$ and $\mathsf{MaxL}$ metrics. Furthermore, we determined specific thresholds for the storage rate ($K/N$) and collusion ratio ($T/N$) below which these schemes remain optimal for MDS-coded and $T$-collusion settings, respectively. Our results highlight that while the distributions supported on endpoints ($M'=0$ and $M'=M-1$) are optimal for low code rates, intermediate values of $M'$ become necessary to maximize the rate as the storage efficiency $K/N$ (or the ratio $T/N$ for $T$-collusion) increases beyond the derived thresholds. The exact characterization of the optimal rate-privacy achieved by the Sun-Jafar-type and the Banawan-Ulukus-type WPIR schemes in this regime is ongoing. However, the question of tight information-theoretic converses for the class of all WPIR schemes remains wide open, with only few results existing in literature. 

\bibliographystyle{IEEEtran}
\bibliography{ref.bib}
\appendices
\section{Useful Lemmas}
\label{appendix:usefullemmas}

\begin{lemma}
\label{lemma:f_increases_with_x_WBU_MIL_inter}
    For $a, b$ and $x$ be three positive reals such that $0<a<b$ and $x\in(0,1)$. Consider a function defined as $$h(x)=\frac{1-x^a}{1-x^b}.$$ Then the function $h(x)$ is increasing in $x\in(0,1)$.
\end{lemma}
\begin{IEEEproof}
    We will start by differentiating $h(x)$ with respect to $x$. Then, given constraints on $a$, $b$, and $x$, we will prove the desired result.
    \begin{align}
        g'(x)&=\frac{(1-x^b)(-ax^{a-1})-(1-x^a)(-bx^{b-1})}{(1-x^b)^2}\nonumber\\
        &=\frac{abx^{a-1}x^{b-1}}{(1-x^b)^2}\left(-\frac{1-x^a}{ax^{a-1}}+\frac{1-x^b}{bx^{b-1}}\right)\nonumber\\
        &=\frac{xabx^{a-1}x^{b-1}}{(1-x^b)^2}\left(-\frac{1}{a}\left(\frac{1}{x^{a}}-1\right)+\frac{1}{b}\left(\frac{1}{x^{b}}-1\right)\right)\nonumber\\
        &=\frac{abx^{a+b-1}}{(1-x^b)^2}\left(\frac{1}{a}\left(1-\frac{1}{x^{a}}\right)-\frac{1}{b}\left(1-\frac{1}{x^{b}}\right)\right)\label{eqn:lemma_g_inc}
    \end{align}
    
    Since $0<a<b$, then we can have i) $\frac{1}{a}>\frac{1}{b}$ and ii) $1-\frac{1}{x^a}>1-\frac{1}{x^b}$ (note that $x\in(0,1)$ therefore for $0<a<b$ we have $x^a>x^b$). Plugging these in \eqref{eqn:lemma_g_inc} and leading all terms are positive, thus we get $g'(x)>0$.
\end{IEEEproof}

\end{document}